\documentclass[preprint,authoryear,12pt]{elsarticle}

\usepackage{amsmath,amssymb,amsfonts}
\usepackage{graphicx}
\usepackage{booktabs}
\usepackage{hyperref}
\usepackage{geometry}
\usepackage{float}
\usepackage{bm}
\usepackage{algorithm}
\usepackage{algorithmic}
\usepackage{empheq}
\usepackage{xcolor}
\usepackage{subcaption}
\usepackage{graphicx}
\usepackage{comment}
\usepackage[T1]{fontenc}
\usepackage{amsthm} 
\theoremstyle{definition} 
\newtheorem*{corollary}{\normalfont\textit{Corollary}} 
\newtheorem*{remark}{\normalfont\textit{Remark}} 

\newtheorem*{mytheorem}{\normalfont\textit{Theorem}} 
\journal{European Journal of Operational Research}

\begin{document}

\begin{frontmatter}

\title{Temperature Anomalies and Climate Physical Risk\\ in Portfolio Construction}

\author[inst1]{Michele Azzone\corref{cor1}}
\ead{michele.azzone@polimi.it}
\cortext[cor1]{Corresponding author}

\author[inst1]{Carlo Bechi}
\ead{carlo.bechi@mail.polimi.it}

\author[inst2]{Gabriele Sbaiz}
\ead{gabriele.sbaiz@deams.units.it}

\affiliation[inst1]{organization={Department of Mathematics, Politecnico di Milano},
    addressline={Via Bonardi 9},
    city={Milano},
    postcode={20149},
    country={Italy}}

\affiliation[inst2]{organization={Department of Economics, Business, Mathematics and Statistics, University of Trieste},
    addressline={Via A. Valerio, 4/1},
    city={Trieste},
    postcode={34127},
    country={Italy}}

\begin{abstract}
Driven by the increasing frequency and intensity of natural disasters and chronic climate threats, we investigate the impact of physical climate risk on global equity portfolios. By employing a panel regression analysis on sectoral returns, we provide statistical evidence that extreme temperature events exert a negative effect on most sectors. We introduce two novel metrics based on these temperature anomalies, \textit{Climate Risk Exposure} and \textit{Climate Exposure Volatility}, in order to measure the environmental vulnerability of a portfolio. Unlike available static country-level indices, these metrics incorporate the time varying probability of extreme events and their relations with firm-specific asset intensity. We integrate these measures into a multi-objective portfolio optimization framework. This approach extends the traditional Mean-Variance paradigm, allowing investors to construct portfolios that are resilient to physical climate shocks without sacrificing diversification. Finally, we conduct a backtesting analysis to show the practical benefits of incorporating these climate risk metrics into the investment process, evaluating how climate-aware strategies perform relative to traditional benchmarks.
\end{abstract}

\begin{keyword}
Climate Aware Portfolio Optimization, Physical Climate Risk, Temperature Anomalies, Climate Change
\end{keyword}

\end{frontmatter}

\section{Introduction}
	Physical climate risk is becoming a relevant factor in the asset management industry, comprising both acute hazards and chronic long-term shifts \citep{bressan2024asset}. Given the increasing frequency of climate-related phenomena\footnote{See \url{https://www.eea.europa.eu/en/analysis/indicators/economic-losses-from-climate-related}} and the continued rise in greenhouse gas emissions, asset managers and financial institutions worldwide are seeking methods to incorporate these risks into investment decisions \citep{campiglio2023climate,azzone2026physical,luciani2026transition}. Despite this growing attention, there remains a significant gap in effective quantitative metrics capable of capturing how corporate economic value is affected and, in turn, guiding the decisions of a rational investor \citep{bortolan2024volatile}.
    We propose two new metrics for the expected frequency and volatility of extreme climate events within a portfolio, based on temperature anomalies (see Section \ref{sec:measures_of_risk_for_extreme_climate_events} below). We then integrate these metrics into a multi-objective optimization problem, extending the mean-variance framework to account for physical risk. To the best of our knowledge, this is the first attempt to offer a flexible, data-accessible approach for constructing optimal portfolios that explicitly consider physical climate risk and its diversification.
    
	Climate risk has emerged as a critical financial threat, broadly categorized into physical and transition risks \citep{bolton2023global}. Focusing on the former, the last decade has witnessed a sharp increase in both acute hazards, such as heatwaves, floods, unprecedented wildfire activity, and chronic impacts like rising sea levels, with extreme events becoming up to 20 times more likely in certain regions \citep{jones2024state}.  Despite global initiatives aimed at mitigation, the escalating frequency of these physical threats poses severe financial consequences for firms that remain ill-prepared for the resulting economic losses \citep{wang2023climate}.
	
	Classifying physical climate risk for a portfolio is a complex endeavor, and consequently very few contributions are currently available. While some country-level indices exist, such as the Notre Dame Global Adaptation Initiative (ND-GAIN)\footnote{For further detail refer to: \url{https://gain.nd.edu/our-work/country-index/}}, they have significant limitations. The ND-GAIN index quantifies a country's exposure, sensitivity, and adaptive capacity across six critical sectors (food, water, health, ecosystem services, human habitat, and infrastructure). However, it does not provide a multivariate time-series of extreme events but just a country level score. A timeseries framework would be essential for understanding the impacts on stock returns and the correlations between events, which are necessary conditions for determining whether portfolio diversification is possible.
	Contributions in this framework usually focus on a timeseries model for a single risk, e.g., wildfires for mortgages credit risk in the case of \citet{kahn2024adaptation}. They show that Mortgage-Backed Securities (MBS) can effectively mitigate physical climate risk for retail mortgages by minimizing the spatial concentration and correlation of the underlying assets. By leveraging these diversification metrics, the authors construct optimal, climate-resilient portfolios that minimize financial losses. We aim to apply a similar framework to global equity portfolios, addressing exposure to general climate-related phenomena, which we proxy using extreme temperature anomalies \citep[see e.g.,][and references therein]{fierro2014relationships, tzouvanas2019can, attilio2025impact, ge2025abnormal}.
	
	There is a large literature on incorporating climate risk in asset management. However, most contributions focus on sustainability or transition risks, with very few addressing physical risks \citep[see][for a survey]{le2022portfolio}. In this context, the literature on ESG (Environmental, Social and Governance) portfolio selection generally follows two main approaches: incorporating ESG preferences directly into the utility function \citep{pedersen2021responsible, avramov2022sustainable,bertelli2025sustainable} or applying explicit constraints to the optimization problem \citep{de2023esg, morelli2024responsible, aprea2025neural}. Empirical studies utilizing the latter, specifically those minimizing variance or tracking error subject to ESG or carbon targets, outline that such constraints do not necessarily compromise risk-adjusted returns \citep{bolton2022net, azzone2026asset}.  A notable exception is the recent preprint by \citet{luciani2026transition}, which integrates a constraint—based on a risk measure constructed using the ND-GAIN index—into the traditional Mean-Variance framework. 
    
    In much of the aforementioned literature, the additional objective is either incorporated into the utility function—necessitating a specific  risk aversion parameter—or imposed as an arbitrary constraint. In contrast, we argue that generating a Pareto frontier via multi-objective optimization is preferable, as it provides decision-makers with a comprehensive map of the trade-offs between expected returns, standard financial risk, and physical climate risk. Moreover, minimizing physical risk variance is crucial because it accounts for physical risk diversification—a dynamic overlooked when merely aggregating individual asset scores as is done in the industry and by commercial providers \citep{luciani2026transition}. Furthermore, the inclusion of non-negativity constraints on portfolio weights renders analytical solutions intractable, making a numerical optimization approach necessary.

To tackle the multi-objective problem that considers physical risk together with standard mean variance, we employ a Multi-Objective Particle Swarm Optimization (MOPSO).
\citet{moore1999application} was the first to extend PSO \citep{kennedy1995particle} to multi-objective optimization, showing that swarm-based approaches are capable of effectively handling conflicting objectives and thereby establishing the basis for subsequent developments. Based on this paradigm, \citet{coello2004handling} proposed a MOPSO algorithm that integrates Pareto dominance with an external archive to preserve solution diversity and enhance convergence. Their work also includes a sensitivity analysis, providing valuable insights into parameter tuning for improved efficiency and diversity \citep[see][for a comprehensive review of MOPSO developments]{reyes2006multi}. \citet{durillo2009multi} performed a comparative study of existing MOPSO variants, identifying key challenges related to diversity maintenance. To address these issues, they introduced a MOPSO variant incorporating velocity constraint mechanisms, resulting in improved stability and diversity preservation. These findings highlight the necessity of adaptive control strategies in MOPSO algorithms.  We refer to \citet{caselli2025bilevel,ehrgott2025fifty,manzoni2026sustainable} for recent applications to sustainability.

	In this work, we bridge the gap between static country-level indicators and the need for dynamic, firm-specific physical risk assessment. Unlike existing literature that often relies on qualitative scores or focuses on specific asset classes like real estate, we provide a generalized framework applicable to global equity portfolios. First, we confirm the materiality of physical risk by empirically testing the impact of temperature anomalies on sector total returns. Second,  we model the time-varying probability of extreme events and define two novel metrics for climate physical risk that are based on physical risk correlations in different continents: the \textit{Climate Risk Exposure} and the \textit{Climate Exposure Volatility}.  Finally, we integrate these metrics directly into a multi-objective portfolio optimization problem. This allows us to extend the standard mean-variance analysis to explicitly minimize the volatility induced by physical climate shocks, offering a viable tool for climate-resilient portfolio construction and climate risk diversification.

The remainder of the paper is structured as follows. Section \ref{sec:extreme_temperature_anomalies} details the methodology used to construct the time series of temperature anomalies and evaluates how these anomalies influence equity returns.
Section \ref{sec:measures_of_risk_for_extreme_climate_events} proposes two distinct risk metrics to quantify physical climate risk. Section
\ref{sec:climate_opt} performs a portfolio optimization that incorporates these climate-sensitive measures. 
 Section \ref{sec:backtesting} conducts a backtesting analysis to evaluate how climate-aware strategies perform relative to traditional benchmarks, while Section \ref{sec:conclusion} concludes. The Appendices are devoted to reporting the proofs and to providing further details and results of the three-objective optimization framework.

\section{Temperature anomalies and their effects on returns}
\label{sec:extreme_temperature_anomalies}

This section details the methodology used to construct and model the time series of extreme temperature events. The goal is to estimate a time-varying probability of temperature anomalies and their correlations in different geographic regions, which serve as a fundamental input for the portfolio optimization framework. We choose to focus on temperature extremes as they are often the primary drivers of extreme climate events, such as wildfires, prolonged droughts, very intense heatwaves and glacial melt episodes \citep[see, e.g., ][]{fierro2014relationships}. Nevertheless, we believe that our approach and methodology are designed to be flexible, allowing them to be extended to any other extreme climate event under consideration.

We use temperature data provided by Our World in Data\footnote{See \url{https://ourworldindata.org}}. The platform sources information exclusively from official international institutions such as the IPCC, NASA, and the World Bank.
We utilize monthly mean temperature data for the period spanning from January 1, 1940, to April 30, 2025. The analysis covers the six major continents: North America, South America, Europe, Africa, Asia, and Oceania. 
Antarctica is excluded from this study. This decision is based on the assumption that the percentage of revenues generated by companies within the MSCI World index in this continent is negligible, thus having no significant impact on the financial risk-return profile of the portfolio.
We adopt a continent-level approach for temperature data because we have company revenue information available at this geographic scale. However, we believe that our analysis could be extended and rendered even more effective by utilizing temperature and revenue data at a national level. 
Such a detailed approach would allow for a more precise identification of the magnitude of climate risks to which the portfolio is exposed within each individual country.

\subsection{Identification of temperature anomalies}

To rigorously identify extreme temperature events, we implement a statistical detection method based on historical baselines.
First, we establish a reference period (baseline) spanning from 1960 to 1990. For each continent $k$ and each month $m \in \{1, \dots, 12\}$, we calculate the historical mean $\mu_{k,m}$ and the standard deviation $\sigma_{k,m}$ using only the observations within this 30-year window. This allows us to capture the seasonal cycle and the natural variability inherent to each specific month of the year and to each specific area of the globe. 

To ensure comparability between different continents and months, the raw monthly temperature $T_{k,t}$ observed at time $t$ is transformed into a standardized anomaly. The transformation is defined as
\begin{equation}
    Z_{k,t} = \frac{T_{k,t} - \mu_{k,m}}{\sigma_{k,m}}\;\;,
\end{equation}
where $m$ is the month corresponding to time $t$. This standardization allows for the consistent comparison of extreme events across diverse climatic zones, regardless of differences in baseline temperatures or local variances. 

We define an indicator variable $B_{k,t}$ to formally identify the occurrence of a positive temperature anomaly at time $t$ in continent $k$. An extreme event is recorded when the standardized temperature exceeds a critical threshold of 2 as follows: \begin{equation}
    B_{k,t} = 
    \begin{cases} 
    1 & \text{if } Z_{k,t} > 2 \\
    0 & \text{otherwise}
    \end{cases}\;\;.
\end{equation}
This threshold is statistically significant as, under the assumption of a normal distribution, it identifies events that lie in the extreme upper tail (approximately the top 2.28\% of observations under Gaussianity assumption) and ensures that we focus on ``extreme'' events.

To capture the dynamic evolution of climate risk over time, we estimate the probability $p_{k,t}$ that an anomaly occurs in continent $k$ at time $t$. Given the binary nature of $B_{k,t}$, we employ a logistic regression model, which is the standard approach for modeling the probability of occurrence of discrete events.

To reflect the complexities of global warming patterns, the model incorporates a quadratic time trend. This allows us to account for potential non-linearities and accelerations in temperature increases that a simple linear trend might fail to capture. For each continent, the log-odds of the probability are modeled as

\begin{equation}
    \label{eq:logit_model}
    \text{logit}(p_{k,t}) = \ln\left( \frac{p_{k,t}}{1 - p_{k,t}} \right) = \beta_{0,k} + \beta_{1,k} t + \beta_{2,k} t^2\;\;,
\end{equation}

where $t$ represents the time index, $\beta_{1,k}$ captures the linear progression of the risk over $t$, and $\beta_{2,k}$ accounts for the quadratic component, identifying whether the frequency of extreme events is accelerating or decelerating.

This specification ensures that our climate risk measures are forward-looking and capable of reflecting the intensifying nature of environmental threats, providing a more robust basis for the subsequent financial risk-return analysis.

As illustrated in Figure \ref{fig:temperature_anomalies}, the fitted probabilities $\hat{p}_{k,t}$ reveal a consistent trend across all analyzed continents.  The probability of encountering an extreme temperature anomaly exhibits a significant increase over the sample period.

\begin{figure}[H]
    \centering
    \includegraphics[width=0.95\textwidth]{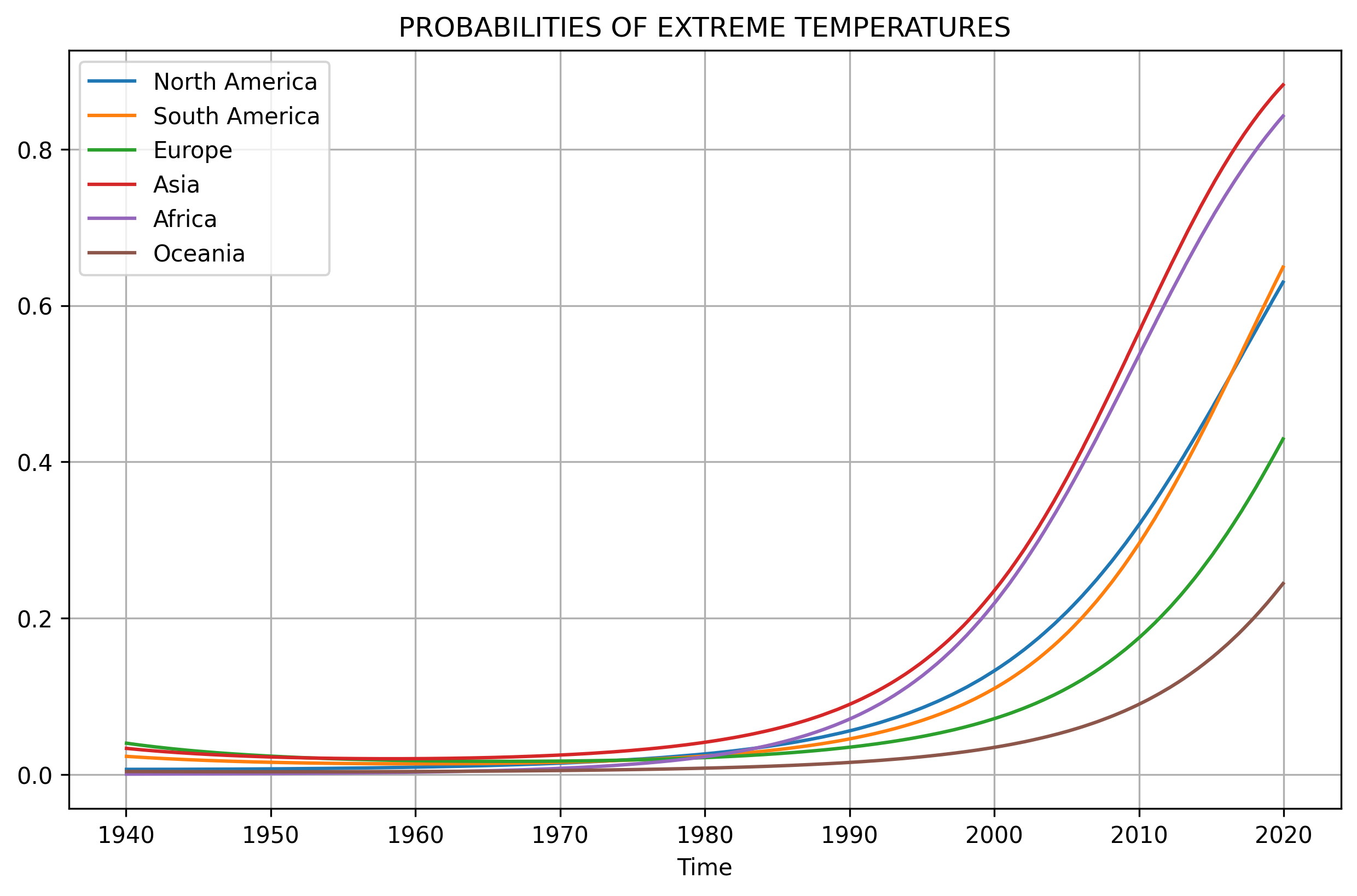} 
    \caption{Probabilities of extreme temperatures over time in the period 1940-2020.}
    \label{fig:temperature_anomalies}
\end{figure}

These pronounced upward trajectories provide empirical evidence that physical climate risk is non-stationary and has intensified in recent decades. In statistical terms, the traditional assumption of a constant risk distribution is invalidated by the clear intensification of temperature anomalies in recent decades.

The performance of the logistic regression models, evaluated through the Area Under the Receiver Operating Characteristic curve (AUROC) and Area Under the Precision-Recall curve (AUPR), is summarized in Table \ref{tab:logistic_performance}. To ensure a robust assessment of the models' reliability, the regression parameters are estimated using a historical training period from 1940 to 2025. 

To evaluate the models, we primarily focus on the AUROC. This metric measures the model's discriminatory power, and an AUROC of 0.5 suggests a predictive ability no better than a random guess, while a value of 1.0 represents perfect classification. As shown in the results, all continents exhibit exceptionally high AUROC values, ranging from 0.8380 for Europe to 0.9396 for Africa. 

Furthermore, we evaluate the model performance using the AUPR, which provides a more stringent evaluation in the presence of imbalanced datasets. 

In our analysis, the AUPR values for all continental models consistently and significantly outperform those of a baseline random classifier. We also have verified that very similar results are obtained when considering a logistic regression trained on the training sample (1940-2019) that we will utilize for portfolio selection. Figure \ref{fig:confronto_continenti} illustrates the probability of extreme events for Europe and North America, comparing forecasts derived from logistic regressions estimated over two distinct sample periods: 1940–2019 and 1940–2025. The fitted curve from the restricted training set closely tracks the curve estimated on the full sample. This alignment highlights that the relatively simple and stylized logistic regression model generalizes exceptionally well.
\begin{figure}[htbp]
    \centering
    \begin{subfigure}[b]{0.8\textwidth}
        \centering
        \includegraphics[width=\textwidth]{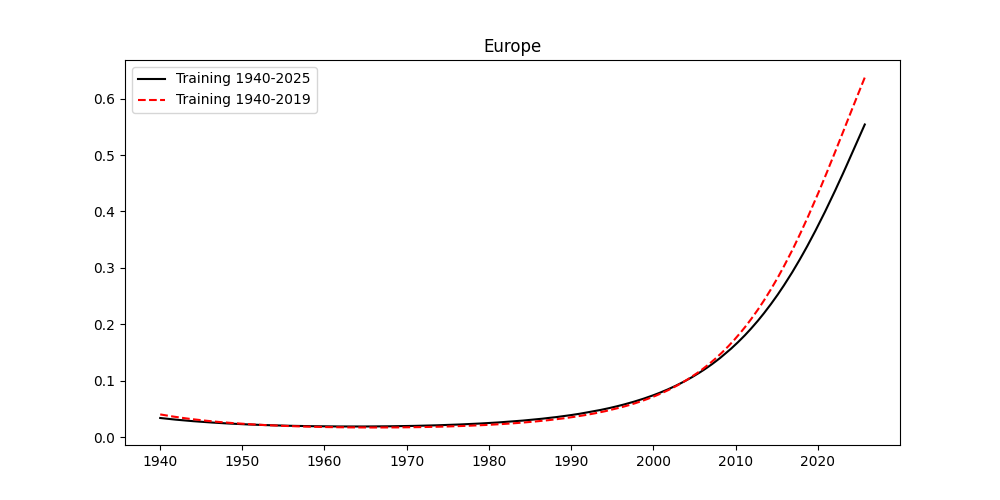}
        \label{fig:europe}
    \end{subfigure}
    \hfill
    \begin{subfigure}[b]{0.8\textwidth}
        \centering
        \includegraphics[width=\textwidth]{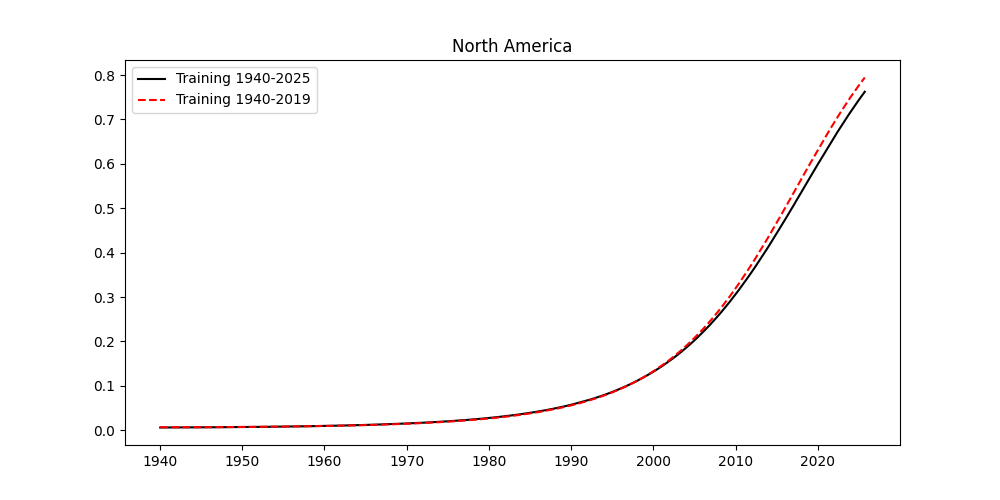}
        \label{fig:north_america}
    \end{subfigure}
    \caption{Estimated probability of extreme events for Europe and North America using logistic regression. The results compare models estimated over the 1940–2025 sample (continuous black line) and the 1940–2019 sample (dashed red line).}
    \label{fig:confronto_continenti}
\end{figure}

In conclusion, the high predictive scores across all geographical regions allow for the categorical rejection of a stationary climate hypothesis. This validates the use of these time-varying probabilities as robust inputs for the subsequent multi-objective portfolio optimization.

\begin{table}[H]
\centering
\caption{Performance metrics for logistic regressions. }
\label{tab:logistic_performance}
\begin{tabular}{lccc}
\hline
\textbf{Continent} & \textbf{AUROC} & \textbf{AUPR} & \textbf{AUPR Random} \\ \hline
Africa          & 0.9396 & 0.8031 & 0.2060 \\
Asia            & 0.9067 & 0.7312 & 0.2225 \\
Europe          & 0.8380 & 0.3776 & 0.0943 \\
North America   & 0.8947 & 0.6100 & 0.1429 \\
Oceania         & 0.8830 & 0.3215 & 0.0564 \\
South America   & 0.8754 & 0.5227 & 0.1341 \\ \hline
\end{tabular}
\end{table}


\subsection{Correlation analysis and spatial dependence}

After estimating the time-varying probabilities of temperature anomalies for each continent, it is essential to analyze the spatial dependence between these events. The objective of this analysis is to determine the correlation structure of climatic shocks across different geographic areas. 
In particular, we aim to quantify the risk that a global portfolio might be simultaneously affected by multiple extreme events occurring in different parts of the world.
We focus on estimating this correlation structure and verifying its theoretical consistency. Specifically, we examine the empirical correlations between the climate indicators and test whether the resulting correlation matrix satisfies the mathematical constraints inherent to a system of Bernoulli random variables. This verification is fundamental to ensuring that our model provides a consistent representation of global climatic interdependencies.

For each pair of continents $(i, j)$, we compute the Pearson correlation coefficient between their respective binary time series of anomalies, $B_{i,t}$ and $B_{j,t}$.
We are able to visualize the result in the following empirical correlation matrix, Figure~\ref{fig:correlation_matrix}.

\begin{figure}[H]
    \centering
    \includegraphics[width=0.85\textwidth]{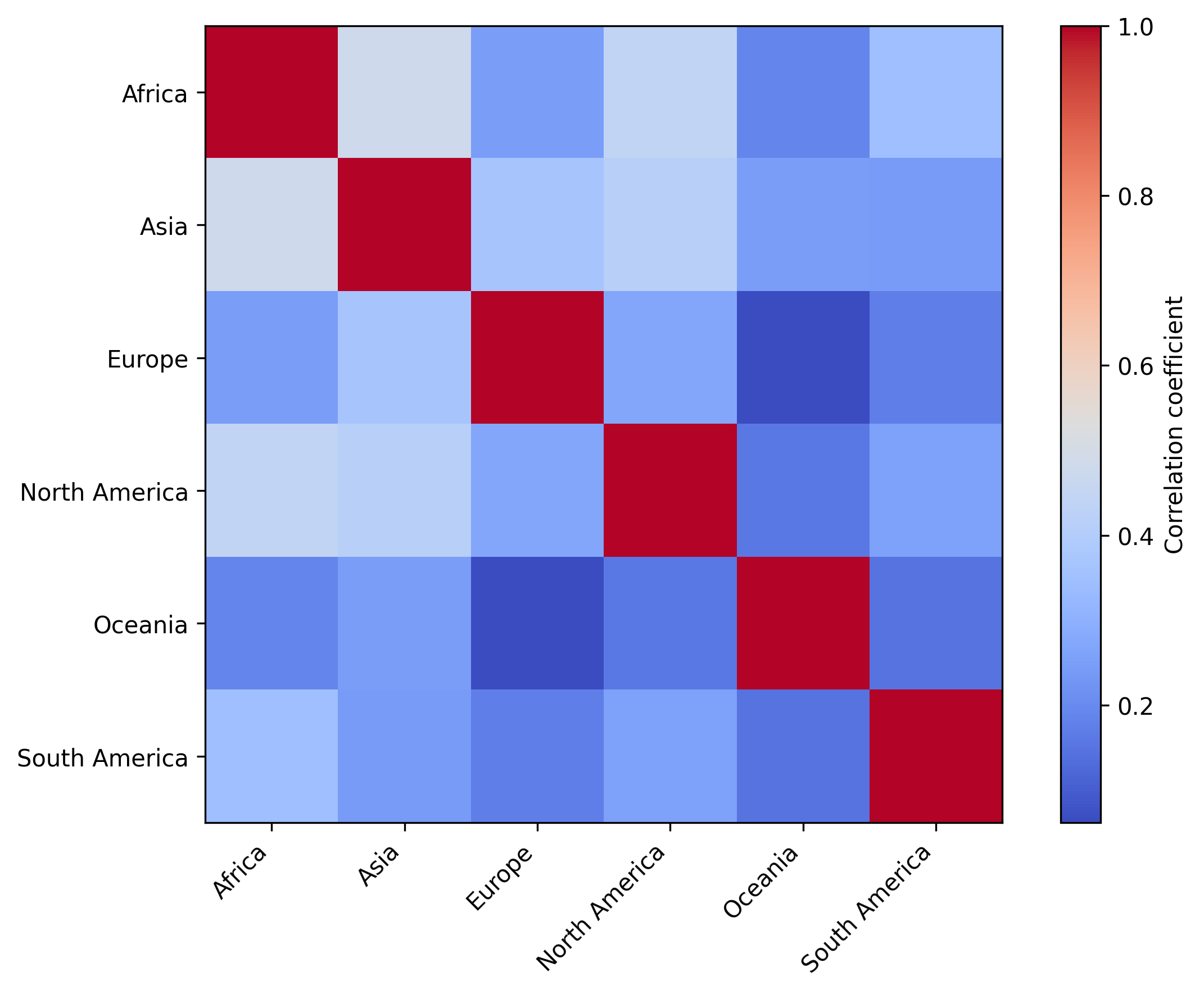} 
    \caption{Correlation matrix of extreme temperature time series.}
    \label{fig:correlation_matrix}
\end{figure}

A primary observation is that all estimated Pearson coefficients are positive. This positivity reflects a global systemic trend in climate anomalies, suggesting that temperature extreme events are driven by a common underlying process, global warming, that tends to lift temperatures across the entire planet simultaneously.

However, the magnitude of these correlations remains relatively low, with values ranging between 0 and 0.48. This evidence is particularly important for two reasons.  On one hand, the positivity confirms that no continent is truly immune or ``negatively correlated'' to the global warming trend. On the other hand, the moderate magnitude (below 0.48) suggests that a good degree of geographical diversification is still possible.

In conclusion, the heatmap reveals a world where climate risk is globally interconnected but still characterized by significant regional specificities.

\subsection{Fréchet-Hoeffding bounds}

While Pearson correlation is a standard measure for
Gaussian random variables, its application to Bernoulli distributions requires careful consideration.
Since our indicator variables $B_{k,t}$ are discrete and binary, the correlation coefficients are subject to mathematical bounds determined by the underlying probabilities of each event.  These bounds could be very interesting for developing robustness and stress test analysis on climate shock coordination.

Every pair of Bernoulli random variables has a maximum and minimum admissible values of correlation  (\citealt{huber2015multivariate}), this constraint is formally expressed by the Fréchet-Hoeffding theorem: 

\begin{mytheorem}[Fréchet-Hoeffding]
For any two random variables $X_1$ and $X_2$ with cumulative distribution functions $F_1$ and $F_2$, with $U \sim \text{Unif}([0, 1])$ the following inequality holds:

\begin{equation}
    \text{corr}(F_1^{-1}(U), F_2^{-1}(1-U)) \leq \text{corr}(X_1, X_2) \leq \text{corr}(F_1^{-1}(U), F_2^{-1}(U))\, .
\end{equation}
\end{mytheorem}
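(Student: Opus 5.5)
The plan is to reduce the statement to a bound on the covariance, with the marginals held fixed. The variances of $X_1$ and $X_2$ are determined by $F_1$ and $F_2$, as are the means. The random variables $F_1^{-1}(U)$ and $F_1^{-1}(1-U)$ also have law $F_1$, and similarly for $F_2$. Hence the three correlations in the statement share the same normalizing denominators $\sqrt{\mathrm{Var}(X_1)\mathrm{Var}(X_2)}$. It therefore suffices to show that
\[
E\bigl[F_1^{-1}(U)F_2^{-1}(1-U)\bigr]\le E[X_1X_2]\le E\bigl[F_1^{-1}(U)F_2^{-1}(U)\bigr].
\]
We assume finite positive variances throughout, which is needed for the correlations to be defined. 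In our Bernoulli application this holds whenever $0<p_k<1$.

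The key tool is Hoeffding's covariance identity. Write $H(x,y)=P(X_1\le x,X_2\le y)$. Applying Fubini to $(X_1-X_1')(X_2-X_2')$, where $(X_1',X_2')$ is an independent copy, gives
\[
\mathrm{Cov}(X_1,X_2)=\int\!\!\int \bigl(H(x,y)-F_1(x)F_2(y)\bigr)\,dx\,dy .
\]
This expresses the covariance as a monotone increasing functional of the joint CDF. Next we use the pointwise Fréchet–Hoeffding bounds
\[
\max\{F_1(x)+F_2(y)-1,0\}\le H(x,y)\le \min\{F_1(x),F_2(y)\}.
\]
These follow from elementary inequalities. For the upper bound, $P(A\cap B)\le \min(P(A),P(B))$. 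For the lower bound, $P(A\cap B)\ge P(A)+P(B)-1$ together with $P(A\cap B)\ge 0$.

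It then remains to identify the extremal joint laws. We use the fact that $F^{-1}(u)\le x$ if and only if $u\le F(x)$ for the generalized inverse. From this,
\[
P\bigl(F_1^{-1}(U)\le x,\;F_2^{-1}(U)\le y\bigr)=P\bigl(U\le\min\{F_1(x),F_2(y)\}\bigr),
\]
which is the upper bound. Similarly, $P(F_1^{-1}(U)\le x,\,F_2^{-1}(1-U)\le y)=P(1-F_2(y)\le U\le F_1(x))$, which equals the lower bound. Applying the identity to these three couplings and using monotonicity of the integrand in $H$ yields the covariance inequalities, and dividing by the common standard deviations gives the statement.

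The main obstacle I expect is the identification of the antitone coupling. The event $1-U\le F_2(y)$ must be handled carefully at the boundaries, since $U$ is continuous and so endpoints carry zero mass. Justifying Fubini in Hoeffding's identity also requires care; finiteness of second moments suffices there. For the Bernoulli case one could alternatively verify the bounds directly, since the joint law is determined by $P(X_1=X_2=1)\in[\max(p_1+p_2-1,0),\min(p_1,p_2)]$, but the general route above covers the stated theorem.
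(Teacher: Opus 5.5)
Your argument is correct. It is the classical proof: Hoeffding's covariance identity $\mathrm{Cov}(X_1,X_2)=\int\!\!\int (H-F_1F_2)\,dx\,dy$, the pointwise bounds $\max\{F_1(x)+F_2(y)-1,0\}\le H(x,y)\le\min\{F_1(x),F_2(y)\}$, and the fact that the comonotone and countermonotone quantile couplings attain them. Your Fubini justification via finite second moments is also the right one.

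The paper does not prove this theorem at all. It quotes it from the literature and uses it as a black box. The only argument it gives, in the appendix proof of the Bernoulli corollary, is the evaluation step at the two extremal couplings. There it computes $E[F_1^{-1}(U)F_2^{-1}(U)]=\min(p_1,p_2)$ and $E[F_1^{-1}(U)F_2^{-1}(1-U)]=\max(0,p_1+p_2-1)$ by measuring the set of $u$ on which the indicator product equals one.

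Your route therefore buys several things the paper leaves implicit:
\begin{itemize}
\item It supplies the proof the paper takes on faith.
\item It makes explicit the finite, positive-variance hypothesis that the phrase ``for any two random variables'' omits.
\item It shows why the bounds hold: with the marginals fixed, covariance is an increasing linear functional of the joint CDF.
\item It shows that the bounds are sharp.
\end{itemize}

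For what the paper actually needs, your closing remark is the most economical route. The joint law of two Bernoulli variables is determined by $q=P(X_1=X_2=1)$. Nonnegativity of the four cell probabilities forces $\max(0,p_1+p_2-1)\le q\le\min(p_1,p_2)$. Then $\mathrm{Cov}(X_1,X_2)=q-p_1p_2$ gives $\rho_{\min}$ and $\rho_{\max}$ directly, with no need for quantile couplings or the general theorem.
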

We prove the corollary that follows for the specific case of two Bernoulli variables, $X_1 \sim \text{Bern}(p_1)$ and $X_2 \sim \text{Bern}(p_2)$ with different probability parameters.
\begin{corollary}
\label{cor:bernoulli_bounds}
The correlation coefficient $\rho$ between two Bernoulli random variables $X_1 \sim \text{Bern}(p_1)$ and $X_2 \sim \text{Bern}(p_2)$ is bounded by $\rho_{min} \le \rho \le \rho_{max}$, where:
\begin{equation}
    \rho_{min} = \frac{\max(0, p_1 + p_2 - 1) - p_1 p_2}{\sqrt{p_1(1-p_1)} \sqrt{p_2(1-p_2)}}
\end{equation}
\begin{equation}
    \rho_{max} = \frac{\min(p_1, p_2) - p_1 p_2}{\sqrt{p_1(1-p_1)} \sqrt{p_2(1-p_2)}} \;\;.
\end{equation}
\end{corollary}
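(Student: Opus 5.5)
We will apply the Fréchet-Hoeffding theorem stated above with $F_1, F_2$ the cumulative distribution functions of $\text{Bern}(p_1)$ and $\text{Bern}(p_2)$, assuming $p_1,p_2\in(0,1)$ so that both variances $p_i(1-p_i)$ are strictly positive and the correlation is well defined. Since the marginals are fixed, the denominators $\sqrt{p_1(1-p_1)}\sqrt{p_2(1-p_2)}$ are common to every coupling. The only quantity that varies between the bounds is the covariance, which for Bernoulli variables reduces to $\text{cov}(X_1,X_2)=\mathbb{P}(X_1=1,X_2=1)-p_1p_2$. The whole proof therefore amounts to computing the joint success probability under the comonotone and countermonotone couplings.

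The first step is to write down the generalized inverse of the Bernoulli distribution function. For $X_i\sim\text{Bern}(p_i)$ we have $F_i^{-1}(u)=0$ for $u\le 1-p_i$ and $F_i^{-1}(u)=1$ for $u>1-p_i$. Hence $F_i^{-1}(U)=\mathbf{1}\{U>1-p_i\}$, which is indeed $\text{Bern}(p_i)$.

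For the upper bound, the comonotone coupling gives
\[
\mathbb{P}(F_1^{-1}(U)=1,F_2^{-1}(U)=1)=\mathbb{P}(U>\max(1-p_1,1-p_2))=\min(p_1,p_2).
\]
Substituting this into the covariance and dividing by the standard deviations yields $\rho_{max}$.

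For the lower bound, $F_2^{-1}(1-U)=\mathbf{1}\{1-U>1-p_2\}=\mathbf{1}\{U<p_2\}$, so
\[
\mathbb{P}(F_1^{-1}(U)=1,F_2^{-1}(1-U)=1)=\mathbb{P}(1-p_1<U<p_2)=\max(0,p_2-(1-p_1))=\max(0,p_1+p_2-1).
\]
This produces $\rho_{min}$.

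The step most likely to need care is the handling of the generalized inverse at the atoms. One must use the convention $F^{-1}(u)=\inf\{x:F(x)\ge u\}$ and note that boundary events such as $U=1-p_i$ have probability zero, so the choice of strict versus weak inequalities does not matter. A second point is the interval-length computation in the countermonotone case, which must be split according to whether $p_1+p_2\le 1$ (the interval is empty, giving $0$) or $p_1+p_2>1$ (giving $p_1+p_2-1$); the $\max$ in the statement covers both cases.

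As a sanity check, and as an alternative route that avoids the theorem entirely, one can verify the result directly. The joint law is determined by $q=\mathbb{P}(X_1=1,X_2=1)$, and all four cell probabilities $q$, $p_1-q$, $p_2-q$, $1-p_1-p_2+q$ are nonnegative exactly when $\max(0,p_1+p_2-1)\le q\le\min(p_1,p_2)$. Because $\rho$ is increasing in $q$, this reproduces both bounds and shows that they are attained.
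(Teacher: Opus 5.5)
Your proposal is correct and follows essentially the same route as the paper. Both apply the Fr\'echet--Hoeffding theorem with the Bernoulli quantile functions, obtain $\mathbb{E}[F_1^{-1}(U)F_2^{-1}(U)]=\min(p_1,p_2)$ and $\mathbb{E}[F_1^{-1}(U)F_2^{-1}(1-U)]=\max(0,p_1+p_2-1)$ as interval lengths, subtract $p_1p_2$ and normalize by the standard deviations. The differences are minor and in your favour. The paper uses $F^{-1}(u)=1$ for $u\ge 1-p$, which differs from your generalized inverse only on a null set. The paper also never states $p_1,p_2\in(0,1)$, which you correctly make explicit. Your closing argument through the four cell probabilities is a self-contained extra that does not rely on the theorem and also shows both bounds are attained.
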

\vspace{0.7cm}
\begin{proof}
The result is derived by applying the Fréchet–Hoeffding theorem, which establishes the existence of upper and lower bounds for joint distribution functions. 
To determine these limits, we analytically compute the covariance using the cumulative distribution functions of the two Bernoulli variables. By exploiting the properties of Bernoulli distributions, the terms $\text{Cov}(F_1^{-1}(u), F_2^{-1}(u))$ and $\text{Cov}(F_1^{-1}(u), F_2^{-1}(1-u))$ can be solved in closed form. Finally, by dividing these analytical covariances by the product of the standard deviations, $\sqrt{p_1(1-p_1)} \sqrt{p_2(1-p_2)}$, we obtain the explicit expressions for $\rho_{min}$ and $\rho_{max}$. For the full proof, please refer to \ref{app:corr_bound_proof}. \end{proof}
These bounds imply that a perfect correlation ($\rho = 1$) is only attainable when the marginal probabilities are equal ($p_1 = p_2$). Conversely, a perfect anti-correlation ($\rho = -1$) can only be achieved if the probabilities are complementary, such that $p_1 = 1 - p_2$. If these conditions are not met, the theoretical maximum and minimum correlations will strictly lie within the interval $(-1, 1)$.

In our framework, the parameters $p_{k,t}$ are non-stationary, as they are derived from a time-dependent logistic regression. This time-varying nature makes the direct application of the Fréchet-Hoeffding formulas complex, as the bounds would technically shift at every time step $t$.

To provide an estimate of the correlation boundaries, we decide to use the mean value of the estimated parameters $\hat{p}_{k,t}$ over the training horizon (1940--2020) as the representative $p_1$ and $p_2$ for the boundary calculations:
\begin{equation}
    \bar{p}_k = \frac{1}{T} \sum_{t=1}^T \hat{p}_{k,t}\;\;.
\end{equation}
This approach allows us to verify if the observed empirical correlations between continents are consistent with the theoretical limits imposed by their average probability of occurrence.

The results of this analysis are summarized in Figure~\ref{fig:correlation_boundaries}, which illustrates the empirical Pearson correlation for each pair of continents plotted against their respective Fréchet-Hoeffding boundaries. This visualization confirms that the empirical estimates fall strictly within the feasible theoretical range.

\begin{figure}[H]
    \centering
    \includegraphics[width=0.9\textwidth, keepaspectratio]{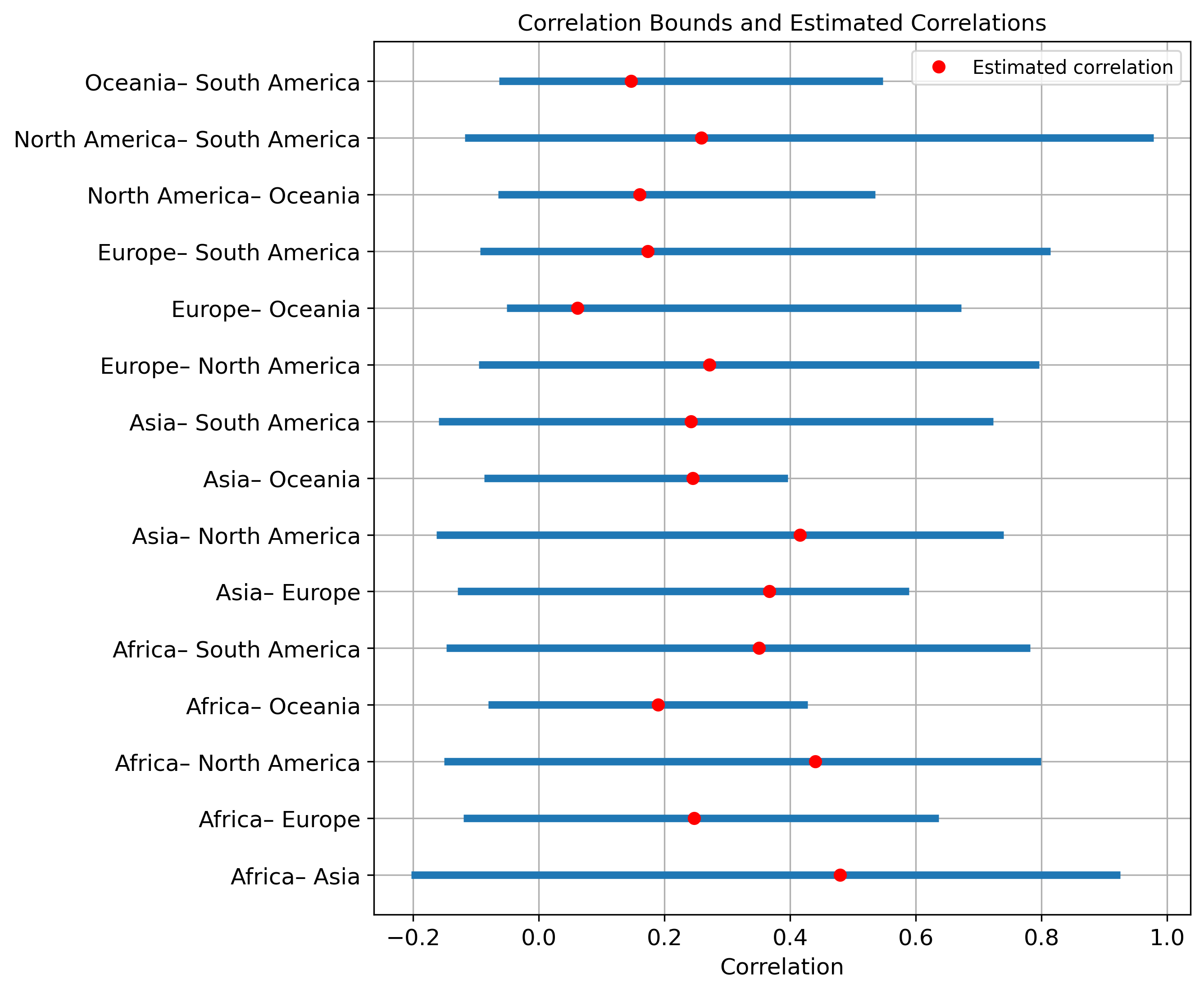}
    \caption{Correlation boundaries of extreme temperature time series.}
    \label{fig:correlation_boundaries}
\end{figure}

It is well-established that when extending to a multivariate Bernoulli distribution involving more than two variables, additional constraints must be satisfied to define the range of admissible correlations \citep{huber2015multivariate, huber2019admissible}. However, these higher-order constraints are generally not explicit and lack a closed-form expression, particularly for systems exceeding four variables. Since our system comprises six variables, one for each continent, applying such higher-order constraints becomes analytically intractable. Consequently, to the best of our knowledge, the pairwise Fréchet-Hoeffding bounds represent the most robust analytical tool available.  


\subsection{The impact of temperature extremes on equity returns}
\label{sec:impact_of_climate_extremes_on_equity_returns}

We now investigate the empirical relationship between extreme temperature anomalies and financial performance across various economic sectors. The core objective is to determine whether climate physical risks, manifested through continental temperature extremes, exert a statistically significant influence on asset returns via a linear regression model that includes climate extreme events among its covariates. The connection between extreme temperatures and returns is investigated by \citet{bortolan2024volatile}, where evidence is provided that high temperature anomaly variance, exhibits a statistically significant negative relationship with equity earnings. In this paper, instead, we try to understand whether extreme positive temperature anomalies, that can be interpreted as proxy of climate extreme events, affect stock returns. The analysis conducted in this section justifies the development of metrics designed to quantify the climate risk of a portfolio and the implementation of portfolio optimization strategies that explicitly incorporate the climate-related measures we propose. 

We construct sectoral return time series using the constituents of the MSCI World index. We consider the companies belonging to the MSCI World universe, categorized by their respective sector. Specifically, we select only those firms  provided with sufficient available data for the analysis and that were constituents of the MSCI World index at the beginning of 2020 and 2025. This selection criterion is essential to mitigate survivorship bias and ensure a consistent sample. 

We utilize total returns, which provide a comprehensive measure of performance by accounting not only for price changes but also for dividends, stock splits, and other corporate actions.\\
For each sector $s$, the return $R_{s,t}$ is calculated as the market capitalization weighted average of the individual total returns of its constituent firms within the MSCI World.

The foundational step of our analysis involves extending the traditional Capital Asset Pricing Model \citep{sharpe1964capital}.  We augment the framework by introducing a climate-risk proxy to capture the potential impact of physical climate shocks, specifically, we explicitly incorporate into the model, as covariates, the time series of extreme temperature anomalies previously constructed for each continent. This approach allows us to directly quantify the sensitivity of asset returns to localized climatic extremes.

For every sector $s$ and every continent $k$, we estimate the following regression:
\begin{equation}
    R_{s,k,t} - R_{f,t} = \alpha_{s,k} + \beta_{MKT,s,k} (R_{MKT,t} - R_{f,t}) + \gamma_{s,k} B_{k,t} + \epsilon_{s,k,t}\;\;,
\end{equation}

where $B_{k,t}$ is the temperature anomaly indicator for continent $k$, and $\gamma_{s,k}$ is the parameter of interest, representing  the sensitivity of the sector to extreme temperature events in a specific continent. Moreover, the quantity
$R_{s,k,t} - R_{f,t}$ represents the sectoral excess return, calculated by subtracting the monthly risk-free rate from the sectoral total return, and $R_{MKT,t} - R_{f,t}$ is the market excess return, derived from the performance of the overall MSCI World Index minus the risk-free rate.

The model parameters are estimated over the 2004-2025 timeframe, which corresponds to the period for which financial return data are available. To ensure the robustness of our statistical inference, the regression incorporates HAC (Heteroskedasticity and Autocorrelation Consistent) robust standard errors \citep{newey1986simple}. Furthermore, we exclude sectors regarding banks, financial and insurance companies from our analysis. This exclusion is justified by the fact that, without an in-depth analysis of their balance sheets and underlying portfolios, it is particularly challenging to determine their precise geographical exposure and the extent of their vulnerability to climate physical risks. 

The results are presented in the following Table \ref{tab:sector_continent_gamma}, specifically, only the $\gamma$ that are statistically significant are reported.

\begin{table}[H]
\centering
\caption{The table reports the estimated $\gamma_{s,k}$ coefficients and p-values for pairs reaching statistical significance ($p < 0.1$),  displaying only the coefficients that are statistically significant.}
\label{tab:sector_continent_gamma}
\small
\begin{tabular}{llcc}
\hline
\textbf{Sector} & \textbf{Continent} & \textbf{$\bf \boldsymbol{\gamma}_{s,k}$} & \textbf{p-value} \\ \hline
Applied Resources & North America & -0.0140 & 0.001*** \\
                           & Oceania       & -0.0110 & 0.049** \\ \hline
Automobiles \& Auto Parts & Asia       & -0.0138 & 0.052* \\
                             & Oceania     & -0.0183 & 0.100* \\ \hline
Chemicals & Africa               & -0.0064 & 0.017** \\
                   & Asia                 & -0.0076 & 0.003*** \\
                   & North America        & -0.0063 & 0.058* \\
                   & Oceania              & -0.0119 & 0.001*** \\ \hline
Consumer Goods Conglomerates & Africa  & -0.0097 & 0.053* \\
                                & Asia    & -0.0106 & 0.031** \\
                                & Oceania & -0.0094 & 0.067* \\ \hline
Cyclical Consumer Products & Asia  & -0.0075 & 0.068* \\
                                & Europe    & -0.0088 & 0.057* \\
                                & North America & -0.0083 & 0.069* \\
                                & Oceania & -0.0113 & 0.053* \\
                                \hline  
Cyclical Consumer Services & Asia & -0.0109 & 0.013** \\
                               & Oceania       & -0.0120 & 0.030** \\ \hline
Energy - Fossil Fuels & North America & -0.0153 & 0.063* \\
                               & Oceania       & -0.0192 & 0.066* \\ \hline
Food \& Beverages & Africa            & -0.0058 & 0.039** \\
                           & North America     & -0.0095 & 0.006*** \\
                           & Oceania           & -0.0115 & 0.010*** \\ \hline
Food \& Drug Retailing & North America &  -0.0098 & 0.020** \\
                           & Oceania           & -0.0135 & 0.015** \\ \hline                    
Industrial \& Commercial Services & Africa       & -0.0080 & 0.023** \\
                                & Asia         & -0.0139 & 0.000*** \\
                                & North America & -0.0096 & 0.001*** \\
                                & Oceania      & -0.0084 & 0.034** \\
                                & South America & 0.0068  & 0.058* \\ \hline
Mineral Resources & Africa            & -0.0136 & 0.028** \\
                           & Asia              & -0.0148 & 0.011** \\
                           & North America     & -0.0114 & 0.091* \\ \hline
Personal \& Household Products \& Services & North America        & -0.0066 & 0.088* \\ \hline       
Pharmaceuticals \& Medical Research & North America        & -0.0090 & 0.085* \\ \hline 
Real Estate & Africa       & -0.0077 & 0.070* \\ \hline
Technology Equipment & Asia           & -0.0077 & 0.016** \\
                              & Europe         & -0.0053 & 0.099* \\ \hline
Telecommunications Services & Oceania      & -0.0133 & 0.008*** \\ \hline
Transportation & North America        & -0.0068 & 0.026** \\ \hline
\multicolumn{4}{l}{\small \textit{Significance levels: *p$<$0.1, **p$<$0.05, ***p$<$0.01}} \\
\end{tabular}
\end{table}

The estimation of the model across all sector-continent pairs yields several noteworthy results. First, 17 out of 23 sectors exhibit returns that are significantly impacted by extreme temperature events in at least one continent. Second, a remarkably consistent pattern emerges among the statistically significant coefficients: the sign of $\gamma_{s,k}$ is always negative, with only a single exception. This predominant negative relationship suggests that extreme temperature anomalies generally exert a downward pressure on asset returns.

Furthermore, we observe that sectors such as Chemicals, Consumer Goods Conglomerates, Cyclical Consumer Products, Food \& Beverages, Industrial \& Commercial Services, and Mineral Resources are negatively impacted by extreme temperature events in at least three different continents. 
The heavy reliance on large-scale production plants and refineries in the Chemicals and Mineral Resources sectors, alongside extensive warehousing and distribution networks for Consumer Goods Conglomerates, Cyclical Consumer Products, and Food \& Beverages, creates significant vulnerabilities to climatic shocks. Specifically, extreme events can trigger severe operational disruptions. Rather than focusing solely on direct structural destruction, the heightened frequency of climate extremes leads to critical operational downtime, supply chain bottlenecks, and the potential degradation of temperature-sensitive physical inventories and assets \citep{starr2000effects, graff2014temperature, neidell2021temperature}.

While the initial findings offer a valuable preliminary perspective, they may be constrained by limited statistical power or omitted variable bias. To test the robustness of our results, we  estimate a specific $\gamma$ for each sector in a panel dataset framework by incorporating fixed effects to absorb systematic variations independent of temperature extremes.
 The panel regression reads as 

\begin{equation}
    R_{s,k,t} - R_{f,t} = \alpha_s + \beta_{MKT,s} (R_{MKT,t} - R_{f,t}) + \gamma_s B_{k,t} + \text{FE}_{\text{month}} + \text{FE}_{\text{continent}}+ \epsilon_{s,k,t}\;\;,
\end{equation}
with $\gamma_s$ the  climate coefficient for sector $s$, estimated across all continents. We consider continent fixed effects (FE), which control for time-invariant characteristics of each continent (e.g., geographic location or structural economic differences), and  month fixed effects, given by 12 dummy variables (one for each month of the year), which control for seasonality in both climate patterns and economic activity. 

This regression captures the common sensitivity of a sector's returns to extreme temperature shocks ($B_{k,t}$) across different regions. 


Table \ref{tab:panel_sector_results} presents the sectors characterized by a statistically significant $\gamma_s$. While most of these sectors are already introduced in Table \ref{tab:sector_continent_gamma}, this updated selection also includes Retailers and Software \& IT Services. 
Even under this specification, many sectors continue to exhibit statistically significant negative coefficients, with the exception of the Software \& IT Services sector which has historically maintained minimal physical exposure prior to the recent data center expansion. These findings confirm that the negative impact of climate extremes remains persistent across diverse industries. Moreover, it is worth mentioning that the sectors included in Table \ref{tab:sector_continent_gamma} but omitted from Table \ref{tab:panel_sector_results} continue to display a negative $\gamma_s$, although these estimates are no longer statistically significant.

\begin{table}[H]
\centering
\caption{The table reports the estimated $\gamma_s$ and the respective p-values, displaying only the coefficients that are statistically significant.}
\label{tab:panel_sector_results}
\begin{tabular}{lcc}
\hline
\textbf{Sector} & \textbf{$\bf \boldsymbol{\gamma}_s$} & \textbf{p-value} \\ \hline
Chemicals                        & -0.0062 & 0.0014*** \\
Energy - Fossil Fuels            & -0.0089 & 0.0399** \\
Food \& Beverages                & -0.0050 & 0.0070*** \\
Real Estate                      & -0.0055 & 0.0121** \\
Retailers                        & -0.0026 & 0.0811*   \\
Software \& IT Services          & 0.0026  & 0.0609* \\
Telecommunications Services      & -0.0042 & 0.0718* \\
Transportation                   & -0.0032 & 0.0893* \\ \hline
\multicolumn{3}{l}{\small \textit{Significance levels: *p$<$0.1, **p$<$0.05, ***p$<$0.01}} \\
\end{tabular}
\end{table}

As a final test of the relationship between sectoral returns and extreme climate events, we estimate a single global $\Gamma$ by pooling all sectors and all continents into a single panel model as
\begin{equation}
    R_{s,k,t} - R_{f,t} = \alpha + \beta_{\text{MKT}} (R_{\text{MKT},t} - R_{f,t}) + \Gamma B_{k,t} + \text{FE}_{\text{month}} + \text{FE}_{\text{continent}} + \epsilon_{s,k,t}.
\end{equation}

The results of this panel regression yield a coefficient $\Gamma$ that is negative and highly statistically significant, with an estimated value of $-0.003$ and a $p$-value of 0.0096. The fact that this result holds, while including both continent and month fixed effects, provides strong evidence that extreme temperature events represent a systematic physical risk factor. These climate shocks negatively affect firm valuations globally, justifying the inclusion of climate risk exposure in the portfolio optimization process. 


\section{Climate Physical risk metrics}
\label{sec:measures_of_risk_for_extreme_climate_events}

This section outlines the construction of two distinct metrics designed to quantify physical climate risk within a financial portfolio. The objective is to measure the impact of extreme weather events through two specific dimensions: the \textit{Climate Risk Exposure} (CRE), which captures the average expected exposure, and the \textit{Climate Exposure Volatility} (CEV), which quantifies the instability and volatility of that exposure. 

As a proxy for extreme climate events, this analysis utilizes the previously constructed time series of extreme temperatures for each continent \citep{tzouvanas2019can, attilio2025impact}. Temperature extremes are the fundamental drivers behind a vast array of physical disasters, including heatwaves, wildfires, droughts, intense storms, and hurricanes \citep{fierro2014relationships}.

While this study focuses on temperature as a primary indicator, the framework is designed to be modular. The methodology can be easily extended to more specific climate hazards (e.g., flood risk, sea-level rise) or refined to a more detailed geographic scale, such as country-by-country analysis, provided the relevant Bernoulli time series and the geographical exposure of the firms are available.

\subsection{Firms' Vulnerability}
The sensitivity of a firm to physical climate risk depends on the geographic distribution of its revenues and the vulnerability of its physical assets and supply chain in those areas. Consequently, an industrial company with significant physical infrastructure is more vulnerable to a local climate disaster than a service-based firm. As in \citet{azzone2026physical}, we define asset intensity of firm $i$ as 
\begin{equation}
AI_i = \frac{\text{Tangible Assets}_i}{\text{Revenue}_i}\;\;.
\end{equation}

We believe this metric is highly informative, as it reflects the degree of a company’s dependency on physical assets to generate revenue.  When aggregating this metric by sector, our empirical results confirm our economic intuition: the Real Estate sector exhibits the highest average asset intensity ($10.6$), reflecting a business model where value is primarily linked to the ownership of physical structures and land. Conversely, the Retailers sector shows the lowest average asset intensity ($0.6$), consistent with operations characterized by high inventory turnover and a lower reliance on fixed tangible assets relative to total sales. This polarization highlights that physical vulnerability is not uniform across the economy but is strictly linked to the underlying asset composition. 

To provide a geographical localization of the risk, we also define $S_{i,k}$ the revenue produced in continent $k$ by firm $i$ and by combining it with the asset intensity, we determine the importance of continent $k$ for company $i$ in term of physical risk. 
Furthermore, we define the climate normalized portfolio weight for continent $k$, denoted as $\alpha_k(\bm{w})$,
\begin{equation}
\label{eq:alpha_calc}
\alpha_k(\bm{w}) = \frac{\sum_{i=1}^{N} w_i \cdot AI_i \cdot S_{i,k}}{\sum_{k=1}^K \sum_{i=1}^{N} w_i \cdot AI_i \cdot S_{i,k}} \;\;,
\end{equation}
where $w_i \in \mathbb{R}$ is the portfolio weight of firm $i$ and $N$ is the number of stocks. By using these climate normalized weights we are able to capture the physical risk exposure of the portfolio to a given continent. \\
Unsurprisingly, calculating the climate normalized weights for the MSCI World index, Asia, North America, and Europe exhibit the highest climate normalized weights. In particular, Asia shows the most significant exposure. This is motivated by its role as the ``world's factory'', characterized by a dense concentration of manufacturing hubs and industrial facilities that inherently increase the physical asset intensity of the region.

Using the climate normalized weights and the temperature extreme time series $B_{k,t}$, we can define our two new physical risk metrics. 

The first measure, \textit{Climate Risk Exposure}, quantifies the expected average exposure of the portfolio to extreme climate events at a given time $t$. It represents the ``baseline'' physical risk, calculated as the expected value of the weighted sum of the time-varying Bernoulli variables $B_{k,t}$, using the climate normalized weights $\alpha_k(\bm{w})$. Unlike static indicators, this measure accounts for the non-stationarity of climate trends by utilizing the time-dependent probabilities $p_{k,t}$. It is defined as
\begin{equation} \label{eq:cre_final}
\text{CRE}(w, t) = \mathbb{E} \left[ \sum_{k=1}^K \alpha_k(w) B_{k,t} \right] = \sum_{k=1}^K \alpha_k(w) \cdot p_{k,t}\;\;,
\end{equation}
where $p_{k,t}$ is the probability of an extreme temperature event in continent $k$ at time $t$. A higher $\text{CRE}$ indicates that the portfolio is potentially highly affected by regions currently experiencing a high frequency of climate anomalies. 

While the $\text{CRE}$ provides a mean expectation of physical risk, it inherently lacks the capacity to account for the uncertainty or the potential for simultaneous disasters across different geographic areas. To address this limitation, we introduce the \textit{Climate Exposure Volatility}, denoted as $\text{CEV}(\bm{w}, t)$. This metric quantifies the variance of the portfolio's climate exposure at time $t$, offering a rigorous measure of the stability of its risk profile. A high volatility suggests that the portfolio's risk is poorly diversified across climate zones. In such cases, the portfolio becomes highly susceptible to sudden, large-scale shocks that affect just one geographical zone. Formally, the function $\text{CEV}$ is defined as the variance of the weighted sum of the Bernoulli variables $B_{k,t}$, utilizing the climate normalized weights $\alpha_k(\bm{w})$ previously derived. It is given by

\begin{equation}
\label{eq:f3_final}
\text{CEV}(\bm{w}, t) = \text{Var} \left( \sum_{k=1}^K \alpha_k(w) B_{k,t} \right) = \sum_{k=1}^K \alpha_k^2 \text{Var}(B_{k,t}) + 2 \sum_{k < j} \alpha_k \alpha_j \text{Cov}(B_{k,t}, B_{j,t})\, ,
\end{equation}

where $\text{Var}(B_{k,t}) = p_{k,t}(1-p_{k,t})$ is the variance of the Bernoulli process for continent $k$ at time $t$ and $\text{Cov}(B_{k,t}, B_{j,t})$ is the covariance between the events in continents $k$ and $j$.  Formally, the covariance is expressed as

\begin{equation}
\text{Cov}(B_{k,t}, B_{j,t}) = \rho_{k,j} \sqrt{\text{Var}(B_{k,t}) \text{Var}(B_{j,t})} = \rho_{k,j} \sqrt{p_{k,t}(1-p_{k,t}) p_{j,t}(1-p_{j,t})}\;\;.
\end{equation}

Substituting these expressions we can rewrite the \textit{Climate Exposure Volatility} as
\begin{equation} \label{eq:f3_explicit} \text{CEV}(\bm{w}, t) = \sum_{k=1}^K \alpha_k^2 p_{k,t}(1-p_{k,t}) + 2 \sum_{k < j} \alpha_k \alpha_j  \rho_{k,j} \sqrt{p_{k,t}(1-p_{k,t}) p_{j,t}(1-p_{j,t})} \;\;.\end{equation}

This formulation allows for a decomposition similar to the one found in \citet{kahn2024adaptation}, that decomposes the wildfire risk in the context of a portfolio of Mortgage-Backed Securities (MBS). The first term, $\sum_{k=1}^K \alpha_k^2 p_{k,t}(1-p_{k,t})$, is akin to a Herfindahl Index and measures the geographic dispersion of risk. If we assume uniform probabilities $p_k$, this term is minimized when the portfolio's climate exposure is equally distributed across all $K$ regions (i.e., $\alpha_k = 1/K$ for all $k$). The second term, $2 \sum_{k < j} \alpha_k \alpha_j \text{Cov}(B_{k,t}, B_{j,t})$, accounts for the systemic component of climate risk, reflecting how the correlation of extreme events across different geographic areas amplifies the overall portfolio volatility: if extreme events in two continents (e.g., North America and Europe) are highly correlated, a portfolio with significant exposure to both will exhibit higher volatility. This metric effectively penalizes a lack of geographic diversification against synchronized climate shocks, providing a more comprehensive view of potential portfolio losses than a simple average exposure metric. 

Furthermore, it is insightful to examine the behavior of this metric as the number of geographic regions $K$ increases (e.g., shifting from a continental to a country-level allocation). Let us consider a simplified scenario where the portfolio's climate exposure is uniformly distributed across all regions, such that $\alpha_k = 1/K$ for all $k$, and assume a constant risk level $p_k = p$ and a uniform pairwise correlation $\rho_{k,j} = \rho$. Under these assumptions, the first term of Equation \eqref{eq:f3_explicit} becomes
\begin{equation}
    \sum_{k=1}^K \left(\frac{1}{K}\right)^2 p(1-p) = \frac{1}{K} p(1-p)\, .
\end{equation}
Similarly to what was demonstrated in \citet{kahn2024adaptation}, as $K \to \infty$, this idiosyncratic component tends to zero, reflecting the benefit of granular geographic diversification. However, the second term (the systemic component) behaves differently. Since there are $K(K-1)/2$ pairs in the double summation, the correlation term simplifies to
\begin{equation}
    2 \sum_{k < j} \frac{1}{K^2} \rho \, p(1-p) = 2 \frac{K(K-1)}{2} \frac{1}{K^2} \rho \, p(1-p) = \left(1 - \frac{1}{K}\right) \rho \, p(1-p)\;\;.
\end{equation}
As $K \to \infty$, this term converges to the constant value $\rho \, p(1-p)$. This result highlights that while increasing geographic granularity can eliminate local risk dispersion, it cannot completely eliminate the systemic risk arising from the underlying correlation structure of climate events.

\begin{figure}[H]
    \centering
    \includegraphics[width=1.0\textwidth]{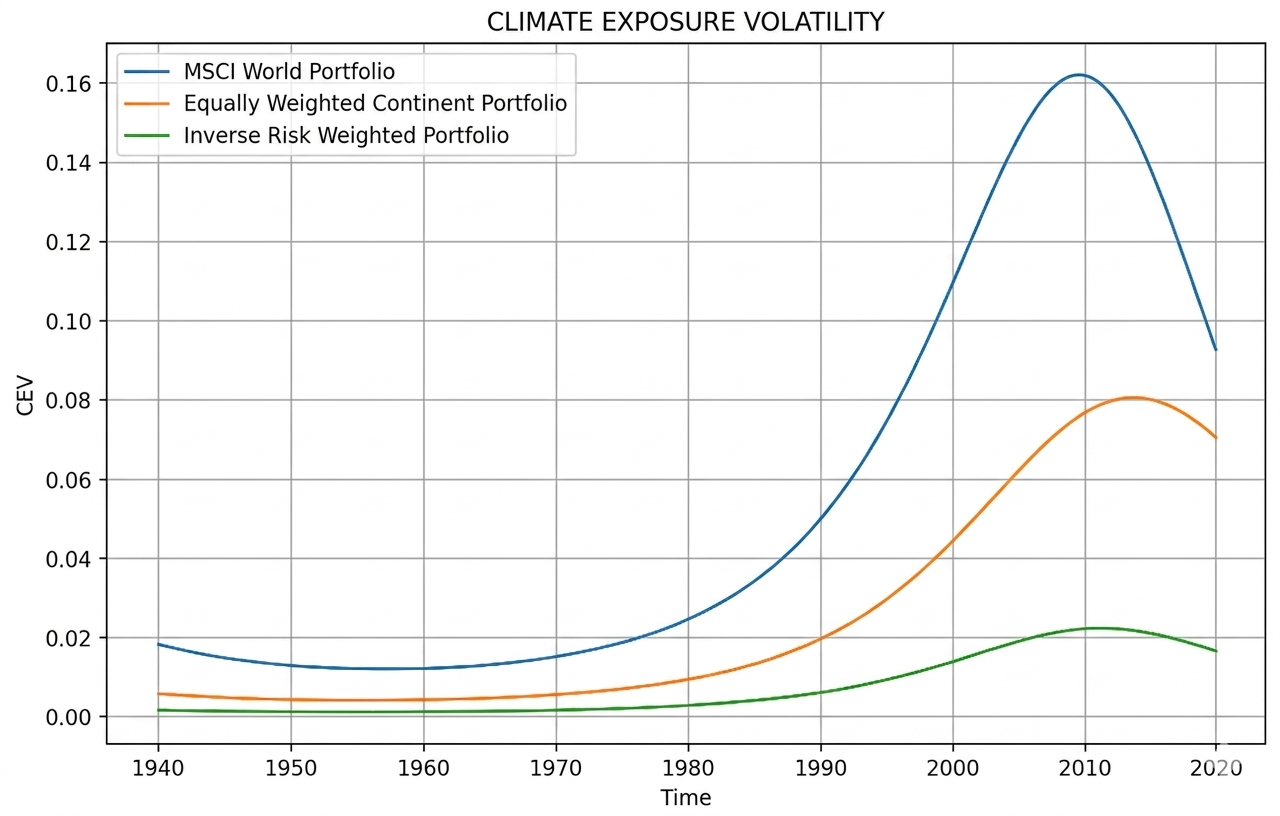} 
    \caption{Evolution of Climate Exposure Volatility across three portfolio strategies.}
    \label{fig:CEV}
\end{figure}

The provided Figure \ref{fig:CEV} illustrates the evolution of CEV over time, comparing three distinct portfolio strategies to highlight different approaches to managing physical climate risk. Specifically, the analysis considers: i) the  MSCI World portfolio; ii) the equally-weighted continent portfolio, where the climate normalized weights are distributed equally across all continents;
iii) the inverse risk-weighted portfolio, where geographic weights are assigned inversely proportional to the probability of extreme temperature events in each continent, thereby tilting the portfolio away from high-risk regions. 

As clearly shown in the chart, the CEV is a function of time, reflecting the historical fluctuations in both the frequency and severity of extreme weather events. Notably, the risk measure exhibits a significant shift starting around 2010, where the volatility begins to trend downward. This phenomenon is explained by the statistical properties of the Bernoulli variables $B_{k,t}$. The variance of a Bernoulli distribution, given by $p_{k,t}(1-p_{k,t})$, reaches its maximum at $p_{k,t}$ = 0.5. Beyond this threshold, as the probability of an extreme event continues to increase ($p > 0.5$), the variance paradoxically decreases. Since 2010, the parameters of several $B_{k,t}$ processes have exceeded $0.5$ and continued to rise, consequently the system reflects a higher degree of ``certainty'' regarding the occurrence of extreme events. This leads to a reduction in statistical volatility, even as the expected frequency (and thus the total risk) remains high. 

Furthermore, the comparison reveals a clear hierarchy in risk mitigation. The equally-weighted climate portfolio consistently highlights a lower CEV compared to the MSCI World baseline, showing that simply diversifying exposure away from market-cap concentrations can stabilize the climate risk profile. Even more effective is the Optimal Risk-Weighted Portfolio, which achieves the lowest volatility among the three strategies. By tilting the weights away from the most vulnerable regions, this approach significantly curtails the portfolio’s susceptibility to climatic shocks, pointing out that active geographic optimization is the most efficient tool for minimizing physical risk exposure.

To investigate the role and the impact of the correlations on the CEV, we compute it for the MSCI World index under three distinct scenarios. Specifically, we employ the empirical correlation estimated via the Pearson coefficient, alongside the theoretical maximum and minimum correlations derived from the Fréchet-Hoeffding bounds presented in Figure \ref{fig:correlation_boundaries}.

The results, illustrated in Figure \ref{fig:msci_cev_evolution}, show that the correlation structure plays a pivotal role in determining the overall risk level. It is evident from the findings that an increase in correlation leads to a significant rise in climate risk. In particular, higher synchronization of extreme events across geographic regions markedly amplifies the portfolio's exposure, thereby reducing the benefits of geographical diversification.

\begin{figure}[H]
    \centering
    \includegraphics[width=1.0\textwidth]{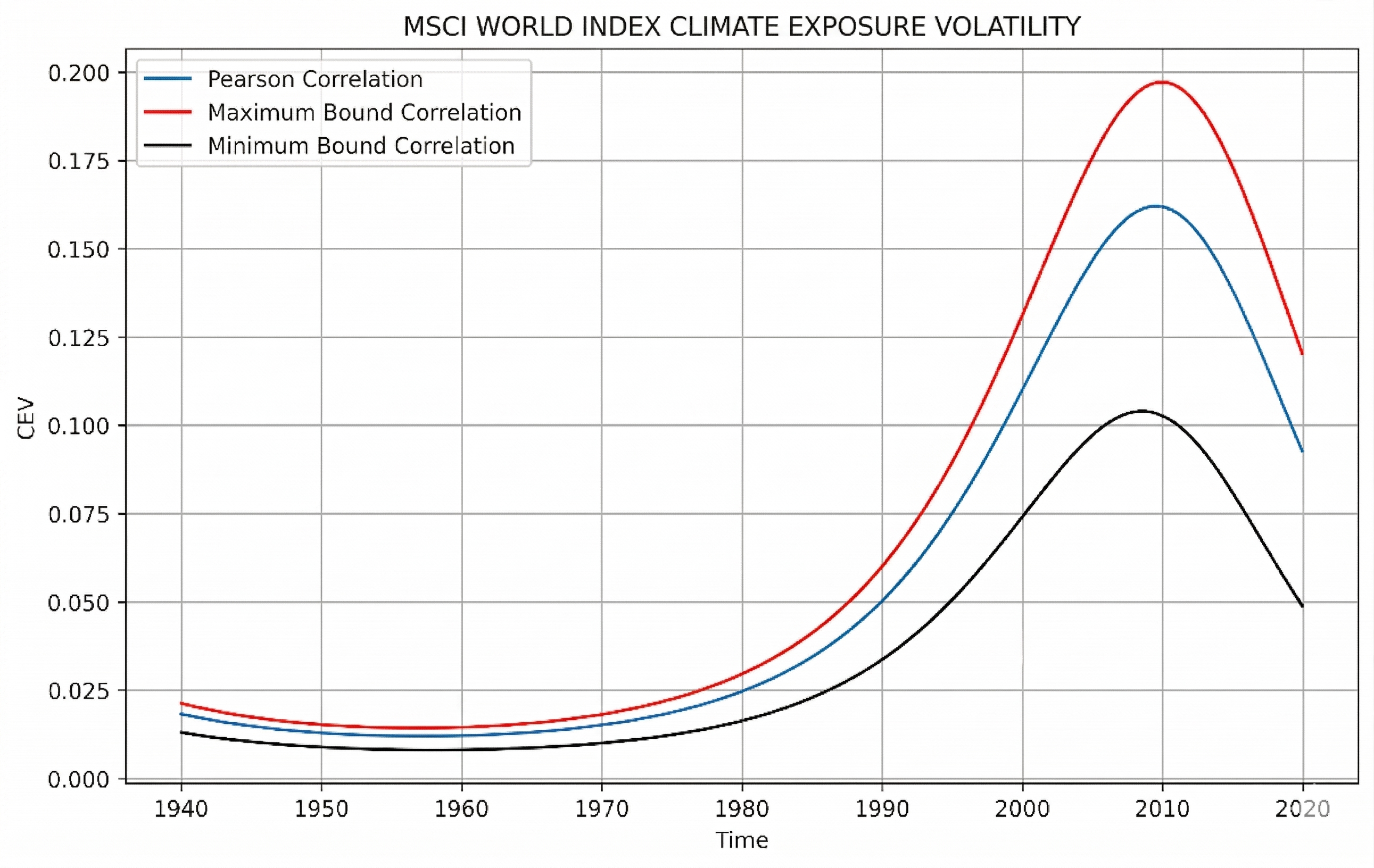} 
    \caption{Evolution of MSCI World Index Climate Exposure Volatility with three different correlation structures.}
    \label{fig:msci_cev_evolution}
\end{figure}


\section{Climate-aware portfolio optimization}
\label{sec:climate_opt}

In this section, we formalize the optimization problem designed to construct a climate-resilient portfolio. The investment universe is composed of N=120 stocks selected from the MSCI World index. This reduction of the investment universe is necessary to maintain computational tractability, balancing the representativeness of the sample with the high-dimensional complexity of the optimization algorithm. To ensure the portfolio serves as a faithful proxy for global equity markets, the selection follows a stratified sampling approach based on market capitalization and location of the headquarter. 

\subsection{The optimization problem}
The 120 assets are distributed across geographic macro-regions to reflect their relative importance in the global financial landscape. Each company is categorized based on the location of its corporate headquarter and the selection is partitioned according to weights reported in Table \ref{tab:geo_dist}.\\
\begin{table}[htbp]
\centering
\caption{Geographic Distribution of Assets in the Portfolio.}
\label{tab:geo_dist}
\begin{tabular}{lrr}
\toprule
\textbf{Region} & \textbf{Weight (\%)} & \textbf{Number of Assets} \\
\midrule
North America   & 45\%                 & 54                        \\
Europe          & 25\%                 & 30                        \\
Asia            & 25\%                 & 30                        \\
Oceania         & 5\%                  & 6                         \\
\bottomrule
\end{tabular}
\end{table}
This allocation strategy ensures that the optimization process operates on a diversified universe that mimics the capital distribution of the MSCI World. By anchoring the selection to headquarters' locations, we provide a realistic benchmark for assessing the trade-offs between financial performance and climate risk, while maintaining consistency with global indexing standards. 

We seek the portfolio $w \in \mathbb{R}^{120}$ that solves the following optimization problem:

\begin{subequations}
    \label{eq:full_optimization}
  \begin{empheq}[left=\empheqlbrace]{align}
    \min_{\bm{w}} \quad & F(\bm{w}) = [ f_1(\bm{w}), f_2(\bm{w}), f_3(\bm{w}) ]^\top \label{eq:multi_obj} \\
    \text{s.t.} \quad & \sum_{i=1}^{120} w_i = 1, \label{eq:budget_constraint} \\
    & w_i \ge 0, \quad \forall i = 1, \dots, 120 \label{eq:long_only}\;\;,
  \end{empheq}
\end{subequations}

where the three objective functions to be minimized are represented by the the expected return $f_1(\bm{w}) = -\bm{w}^\top r$, the market variance $f_2(\bm{w}) = \bm{w}^\top \Sigma \bm{w}$, and the \textit{Climate Exposure Volatility} $f_3(\bm{w})=CEV (\bm{w})$, as defined in Section \ref{sec:measures_of_risk_for_extreme_climate_events}. 

Our optimization problem considers only portfolios where the total capital is fully invested in risky assets. Consequently, the model currently excludes the possibility of allocating a portion of the wealth to a risk-free rate or engaging in short-selling. However, this framework is inherently flexible and can be generalized to include a risk-free asset and short-selling opportunities by relaxing the constraints in \eqref{eq:budget_constraint} and \eqref{eq:long_only}.

\subsection{Optimization algorithm and implementation}
\label{subsec:mopso_implementation}

To solve this multi-objective optimization problem we employ a Multi-Objective Particle Swarm Optimization (MOPSO) algorithm \citep[see][for example]{manzoni2026sustainable}. Unlike single-objective optimization, which seeks a unique global minimum, MOPSO aims to identify the Pareto Front. This front represents the set of all non-dominated solutions, that are portfolios for which it is impossible to improve one objective (e.g., increasing returns) without simultaneously worsening another (e.g., increasing risk or CEV). \\
The MOPSO is an iterative population-based algorithm where a set of candidate solutions, called particles, ``fly'' through the search space. Each particle represents a potential weight vector $\bm{w}$ and adjusts its trajectory based on its own historical best position and the best positions found by its neighbors. In a multi-objective context, the algorithm maintains an external repository, which is a specialized archive that stores the non-dominated solutions discovered during the search. The size of the population ($N_{pop}$) determines the density of the search, while the repository size ($N_{rep}$) controls the granularity and diversity of the final Pareto front. By selecting leaders from this repository to guide the swarm, the algorithm ensures that the particles converge toward the global trade-off surface while maintaining a well-distributed set of investment options. For more details about the MOPSO and the pseudo-algorithm scheme, please refer to \ref{app:mopso}.

\begin{remark}
It is worth highlighting that to ensure the matrix is well-conditioned and to prevent the smallest eigenvalues from being too close to zero, we apply the Ledoit-Wolf shrinkage method \citep{ledoit2003honey}. By employing this shrinkage estimator, we are able to reduce the condition number of the covariance matrix, calculated over the entire available dataset, to below 1,000. Maintaining a condition number of this magnitude is a widely accepted rule of thumb in financial econometrics \citep{won2013condition}.
Finally, to manage the computational complexity associated with an investment universe of N=120 assets, the CEV expressions are implemented using matrix-based operations rather than iterative loops. This vectorized approach significantly reduces the computational cost and accelerates the convergence of the MOPSO algorithm.
\end{remark}

\subsection{Algorithm parameter selection}
\label{subsec:mopso_param_selection}

To identify the optimal configuration for the MOPSO algorithm, we conduct a preliminary analysis focused on three fundamental hyperparameters, i.e., the number of iterations  $Iter \in \{200, 300, 400\}$, the population size $N_{pop}\in \{200, 500, 600\}$, and the repository size $N_{rep}\in \{100, 200, 400\}$. 

The selection of these parameters is crucial because the model is intended for a monthly rebalancing strategy. Given that the algorithm will be executed at the beginning of each month throughout a backtesting period, we must find a robust trade-off between the quality of the resulting Pareto front and the computational time required for execution.

For that task, the algorithm is run for every possible combination of the aforementioned parameters. To ensure the robustness and statistical reliability of the findings, each entry in the following tables represents the average value obtained from 10 independent runs of the optimization process. The tables of this performance analysis are categorized by the number of iterations and reported in the sections below. 

\begin{table}[H] 
\centering
\caption{Algorithm performance for 200 iterations ($Iter = 200$).}
\label{tab:algorithm_performance1}
\begin{tabular}{ccccc}
\hline
\textbf{$N_{rep}$} & \textbf{$N_{pop}$} & \textbf{Time (s)} & \textbf{Hypervolume} & \textbf{Spacing} \\ \hline
100 & 200 & 19.04 & $3.01 \times 10^{-6}$ & $8.01 \times 10^{-4}$ \\
100 & 500 & 50.36 & $4.47 \times 10^{-6}$ & $9.26 \times 10^{-4}$ \\
100 & 600 & 58.54 & $4.92 \times 10^{-6}$ & $9.39 \times 10^{-4}$ \\ \hline
200 & 200 & 30.16 & $8.85 \times 10^{-6}$ & $7.17 \times 10^{-4}$ \\
200 & 500 & 74.62 & $1.08 \times 10^{-5}$ & $7.21 \times 10^{-4}$ \\
200 & 600 & 86.25 & $1.03 \times 10^{-5}$ & $7.50 \times 10^{-4}$ \\ \hline
400 & 200 & 47.52 & $2.04 \times 10^{-5}$ & $5.54 \times 10^{-4}$ \\
400 & 500 & 115.17 & $2.29 \times 10^{-5}$ & $5.70 \times 10^{-4}$ \\
400 & 600 & 140.62 & $2.48 \times 10^{-5}$ & $6.15 \times 10^{-4}$ \\ \hline
\end{tabular}
\end{table}

\begin{table}[H]
\centering
\caption{Algorithm performance for 300 iterations ($Iter = 300$).}
\label{tab:algorithm_performance2}
\begin{tabular}{ccccc}
\hline
\textbf{$N_{rep}$} & \textbf{$N_{pop}$} & \textbf{Time (s)} & \textbf{Hypervolume} & \textbf{Spacing} \\ \hline
100 & 200 & 27.64 & $4.10 \times 10^{-6}$ & $9.06 \times 10^{-4}$ \\
100 & 500 & 71.91 & $4.99 \times 10^{-6}$ & $9.16 \times 10^{-4}$ \\
100 & 600 & 115.21 & $5.41 \times 10^{-6}$ & $8.82 \times 10^{-4}$ \\ \hline
200 & 200 & 42.96 & $9.48 \times 10^{-6}$ & $7.50 \times 10^{-4}$ \\
200 & 500 & 107.04 & $1.11 \times 10^{-5}$ & $7.65 \times 10^{-4}$ \\
200 & 600 & 160.93 & $1.10 \times 10^{-5}$ & $7.35 \times 10^{-4}$ \\ \hline
400 & 200 & 73.74 & $1.91 \times 10^{-5}$ & $5.57 \times 10^{-4}$ \\
400 & 500 & 217.39 & $3.37 \times 10^{-5}$ & $6.16 \times 10^{-4}$ \\
400 & 600 & 257.48 & $3.17 \times 10^{-5}$ & $6.10 \times 10^{-4}$ \\ \hline
\end{tabular}
\end{table}

\begin{table}[H]
\centering
\caption{Algorithm performance for 400 iterations ($Iter = 400$).}
\label{tab:algorithm_performance3}
\begin{tabular}{ccccc}
\hline
\textbf{$N_{rep}$} & \textbf{$N_{pop}$} & \textbf{Time (s)} & \textbf{Hypervolume} & \textbf{Spacing} \\ \hline
100 & 200 & 42.27 & $3.68 \times 10^{-6}$ & $8.97 \times 10^{-4}$ \\
100 & 500 & 95.20 & $4.38 \times 10^{-6}$ & $9.53 \times 10^{-4}$ \\
100 & 600 & 123.88 & $5.03 \times 10^{-6}$ & $8.80 \times 10^{-4}$ \\ \hline
200 & 200 & 58.95 & $9.39 \times 10^{-6}$ & $7.21 \times 10^{-4}$ \\
200 & 500 & 149.82 & $1.15 \times 10^{-5}$ & $7.26 \times 10^{-4}$ \\
200 & 600 & 186.80 & $1.18 \times 10^{-5}$ & $7.43 \times 10^{-4}$ \\ \hline
400 & 200 & 100.14 & $2.40 \times 10^{-5}$ & $5.99 \times 10^{-4}$ \\
400 & 500 & 243.93 & $4.03 \times 10^{-5}$ & $6.30 \times 10^{-4}$ \\
400 & 600 & 287.35 & $4.70 \times 10^{-5}$ & $6.61 \times 10^{-4}$ \\ \hline
\end{tabular}
\vspace{5pt}
\end{table}

Based on the results reported in Tables \ref{tab:algorithm_performance1}, \ref{tab:algorithm_performance2} and \ref{tab:algorithm_performance3}, we can derive several key insights regarding the impact of hyperparameters on the optimization process. In such tables, we consider the hypervolume metric \citep{guerreiro2021hypervolume}, which is a measure of both convergence and diversity with a higher hypervolume describing a superior distribution of solutions across the objective space, and the spacing indicator that quantifies the uniformity of the solutions' distribution along the Pareto frontier by measuring the standard deviation of the distances between adjacent non-dominated vectors \citep{schott1995fault}. A spacing value approaching zero indicates that the solutions are evenly spread throughout the objective space.

Increasing the repository size ($N_{rep}$) from 100 to 200 and 400 leads to a major improvement in the optimization quality. Specifically, by moving from 100 to 200, the hypervolume increases by an order of magnitude, and the improvement remains substantial when further expanding the repository to 400. For this reason, we choose the configuration with $N_{rep}$ = 400, which yields clearly superior results while maintaining a computationally acceptable execution time.
    
Looking at the  population size  we see that a substantial gain in hypervolume is observed when moving from 200 to 500 particles, whereas the improvement becomes less significant when further increasing the size to 600. Hence, we choose $N_{pop}$= 500.
    
 Moreover, the number of iterations significantly influences the quality of the solution. Although the spacing metric remains relatively stable and independent of this parameter, the hypervolume shows a marked increase from $Iter = 200$ to $Iter = 300$, and a further, though less pronounced, improvement from $Iter = 300$ to $Iter = 400$. As expected, a higher number of iterations entails an increase in total execution time. 

After some preliminary tests, the configuration with $Iter = 300$, $N_{rep}$ = 400, and $N_{pop}$ = 500 is the one we deem most appropriate for our analysis. While performance is consistent across the alternatives, this setup yields the lowest spacing value and is characterized by a lower computational cost, thus representing what we consider the optimal trade-off between efficiency and solution quality.

In Figure \ref{fig:Pareto_Frontier_2020_fp}, we present the resulting three-dimensional Pareto frontier computed at January 1, 2020. The visualization maps the complex interactions between the three conflicting objectives: the x-axis represents the portfolio variance, the y-axis shows the \textit{Climate Exposure Volatility}, and the z-axis denotes the expected returns. 

The quality of the frontier, characterized by a highly uniform distribution of points, its smooth and concave shape, and its broad extension across the objective space, confirms the effectiveness of the chosen optimization algorithm and the appropriate calibration of its parameters. 
Furthermore, the visualization clearly illustrates that the classic mean-variance trade-off remains fundamental, even with the integration of CEV. It is evident that to achieve a higher target for expected returns, one must necessarily accept an increase in portfolio risk, specifically in its variance. 

\begin{figure}[H]
    \centering
    \includegraphics[width=0.6\textwidth]{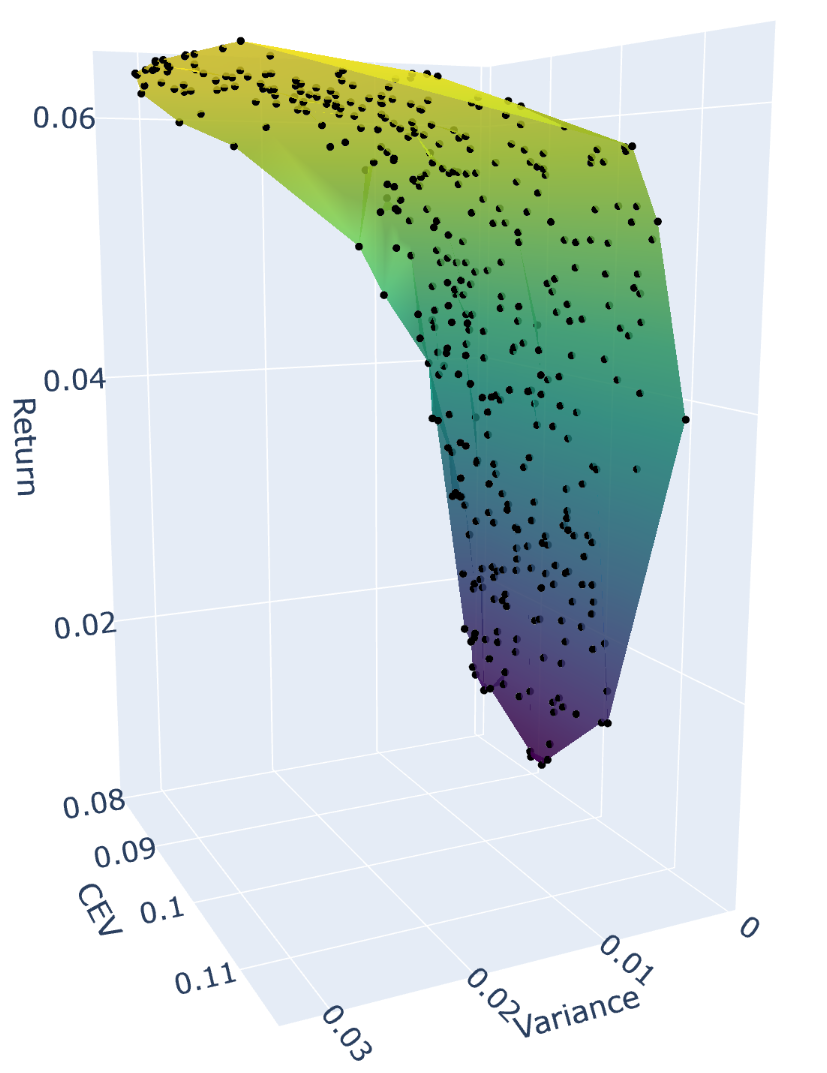} 
    \caption{Pareto frontier for the 3-objective optimization problem. For a detailed visualization of the surface from multiple perspectives, please refer to \ref{app:different_perspectives}.}
    \label{fig:Pareto_Frontier_2020_fp}
\end{figure}

Beyond the objective space, we investigate the composition of the portfolios belonging to the frontier. Specifically, out of a total universe of 120 stocks, we observe that the average number of active positions is 22. This calculation is based on a threshold of $10^{-3}$, where weights below this value are considered negligible and treated as zero. A deeper analysis reveals that for portfolios with low variance, this number tends to be higher, reaching approximately 55 active weights. However, as we move along the frontier toward higher variance levels, the number of non-zero weights decreases. This trend further confirms that a reduction in diversification leads to an increase in risk and potentially to higher returns. 

In contrast, when considering a standard mean-variance frontier that neglects CEV, the average number of non-zero weights drops to 14.\footnote{Results are available upon request.} This significant difference suggests that incorporating an additional risk dimension into the optimization framework emphasizes diversification as a primary tool for risk mitigation, preventing the optimizer from over-concentrating in assets that may appear financially optimal but remain highly vulnerable to climate shocks. 

\begin{remark}
We emphasize that integrating the CEV into the standard mean-variance framework does not overparameterize the problem by introducing something already captured by financial variance. Physical risk events are not assumed to be stationary, as is typically the case for variance in most financial and econometric models (and in the Mean-Variance framework we considered). The distinction is necessary to capture the time-varying dynamics of climate change damages, which are worsening at a rate faster than linear. This can already be corroborated by the data: as discussed above, incorporating the CEV substantially improves portfolio diversification, a finding further corroborated by the backtesting analysis in Section \ref{sec:backtesting}, where we will show that introducing this additional objective does not penalize portfolio performance in mean-variance terms and in some cases even yields improvements.
\end{remark}
To quantify the structural differences between the allocations, we compute the continuous Jaccard \citep[see e.g.,][]{kamiura2023jaccard} distance between the various portfolios on the frontier using the following formulation: 

\begin{equation}
d_J(x, y) = 1 - \frac{\sum_{i} \min(x_i, y_i)}{\sum_{i} \max(x_i, y_i)}.
\end{equation}

The results for the two problems are presented in the heatmap in Figure \ref{fig:comparison_heatmaps}, where portfolios are sorted by increasing order of variance. We report the Mean-Variance-CEV Jaccard distance heatmap on the left and the Mean-Variance on the right. As expected, portfolios with significantly different variances tend to have highly distinct allocations, as evidenced by the high Jaccard distance values. Interestingly, the interference or influence of the third objective (CEV) is clearly visible, as it creates local variations in allocation patterns that deviate from the standard Mean-Variance case.

\begin{figure}[H]
    \centering
    \begin{subfigure}[b]{0.48\textwidth}
        \centering
        \includegraphics[width=\textwidth]{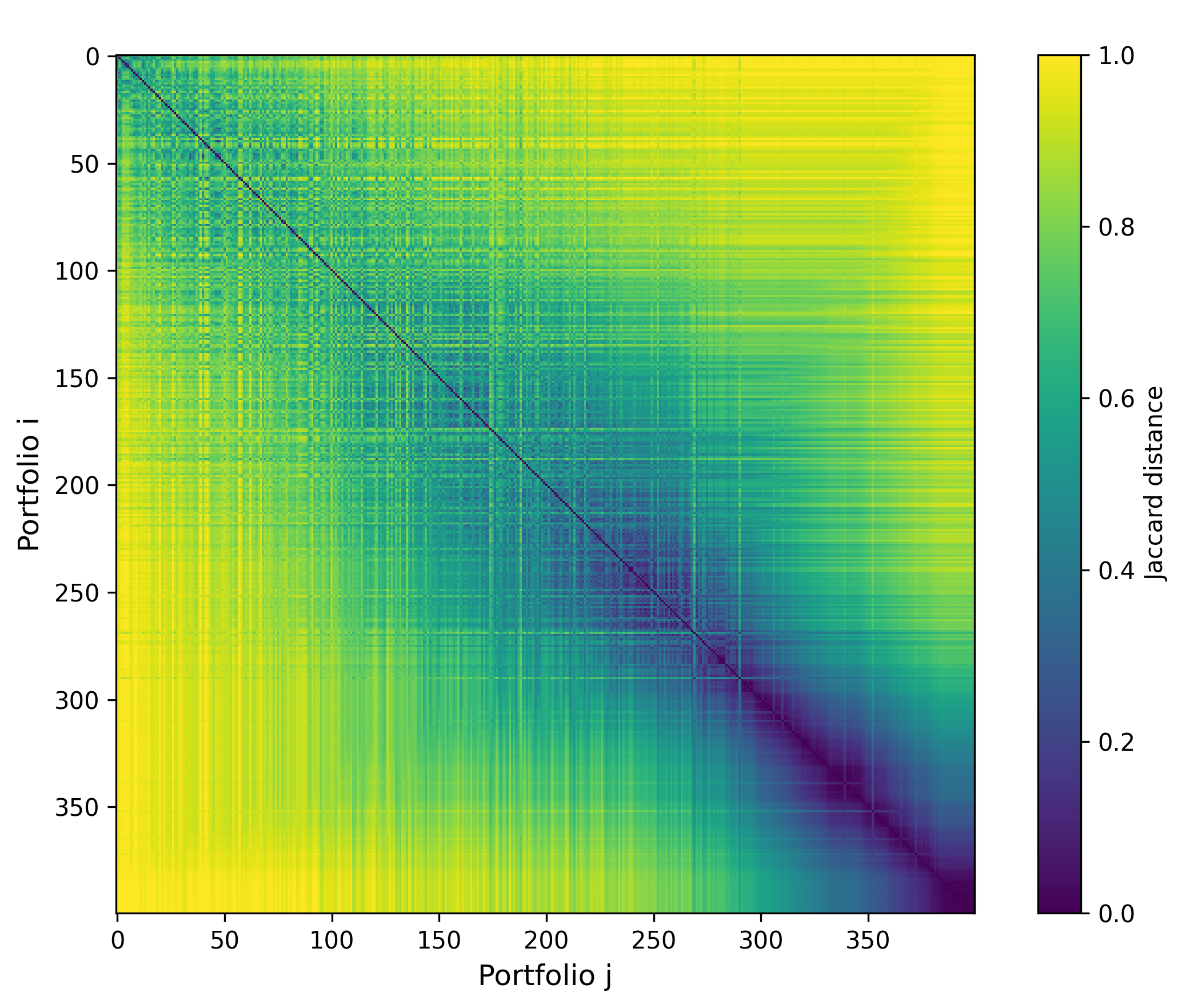}
        \caption{Mean-Variance-CEV.}
        \label{fig:jaccard_heatmap}
    \end{subfigure}
    \hfill 
    \begin{subfigure}[b]{0.48\textwidth}
        \centering
        \includegraphics[width=\textwidth]{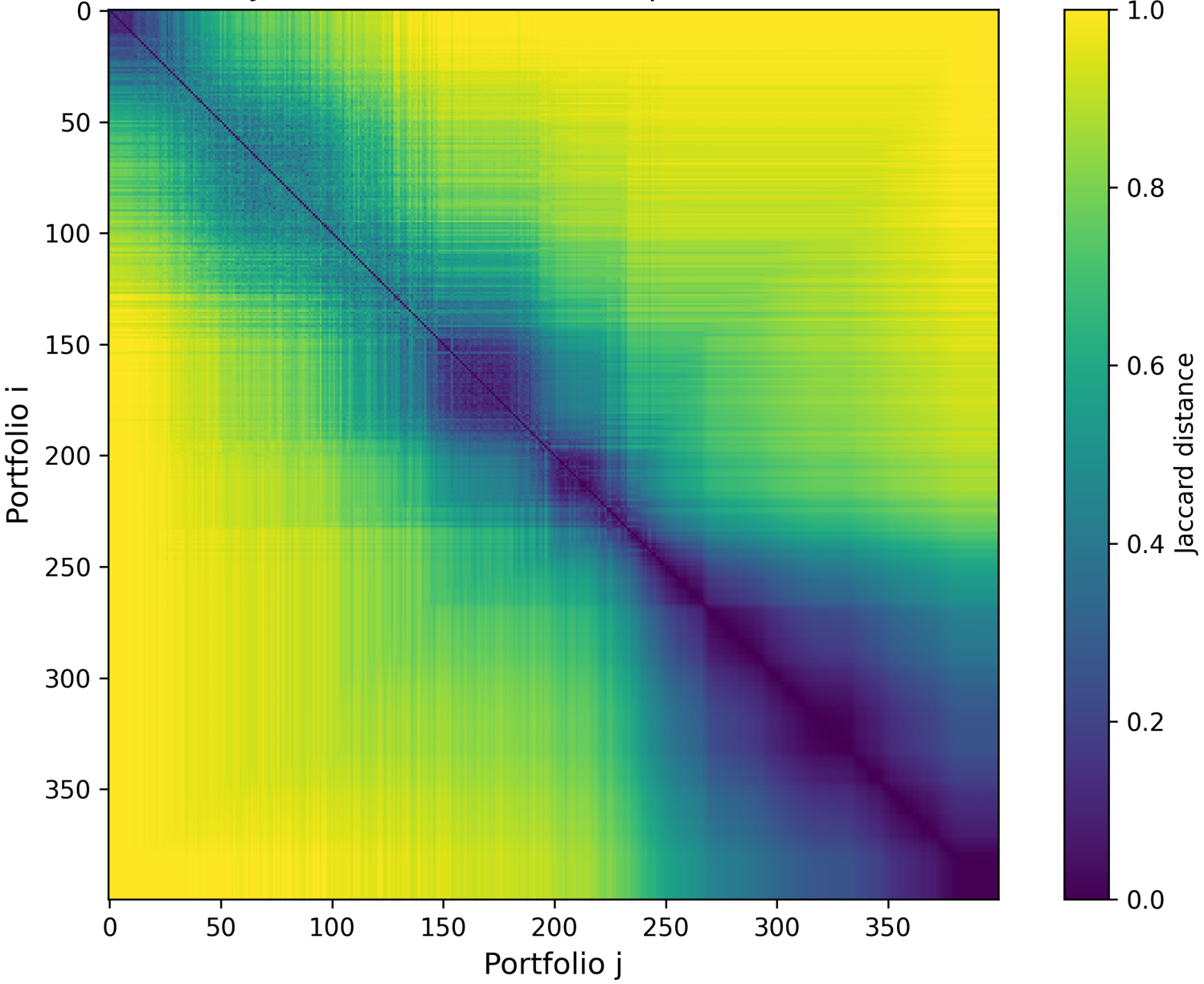}
        \caption{Mean-Variance.}
        \label{fig:heat_mv}
    \end{subfigure}
    
    \caption{Comparison of Jaccard distance heatmaps.}
    \label{fig:comparison_heatmaps}
\end{figure}

In order to further investigate the relationship between \textit{Climate Exposure Volatility} and the portfolio's expected return, we perform a sensitivity analysis by slicing the 3D Pareto frontier into distinct variance intervals. Specifically, the total range of traditional financial variance is divided into several sub-intervals. For each interval, we select the particles (portfolios) falling within that specific range and project them onto a 2D plot. \\
In these visualizations, the x-axis represents the CEV, while the y-axis represents the expected return. To provide a clearer trend analysis, we also plot an interpolated parabolic curve. 

The results, as illustrated in Figure \ref{fig:Cev_vs_Return}, reveal a clear and consistent pattern: an increase in expected return is systematically associated with an increase in CEV. This confirms that CEV operates as an additional dimension of risk. Just as traditional variance measures standard market volatility, CEV quantifies a portfolio's vulnerability to climate-related shocks. Crucially, however, the reduction in expected return associated with minimizing CEV appears limited (see also Figure \ref{fig:Pareto_Frontier_2020_fp}). This supports the premise that investors can mitigate the variance induced by physical climate risks without significantly compromising overall portfolio performance.


\begin{figure}[H]
    \centering
    \includegraphics[width=0.9\textwidth]{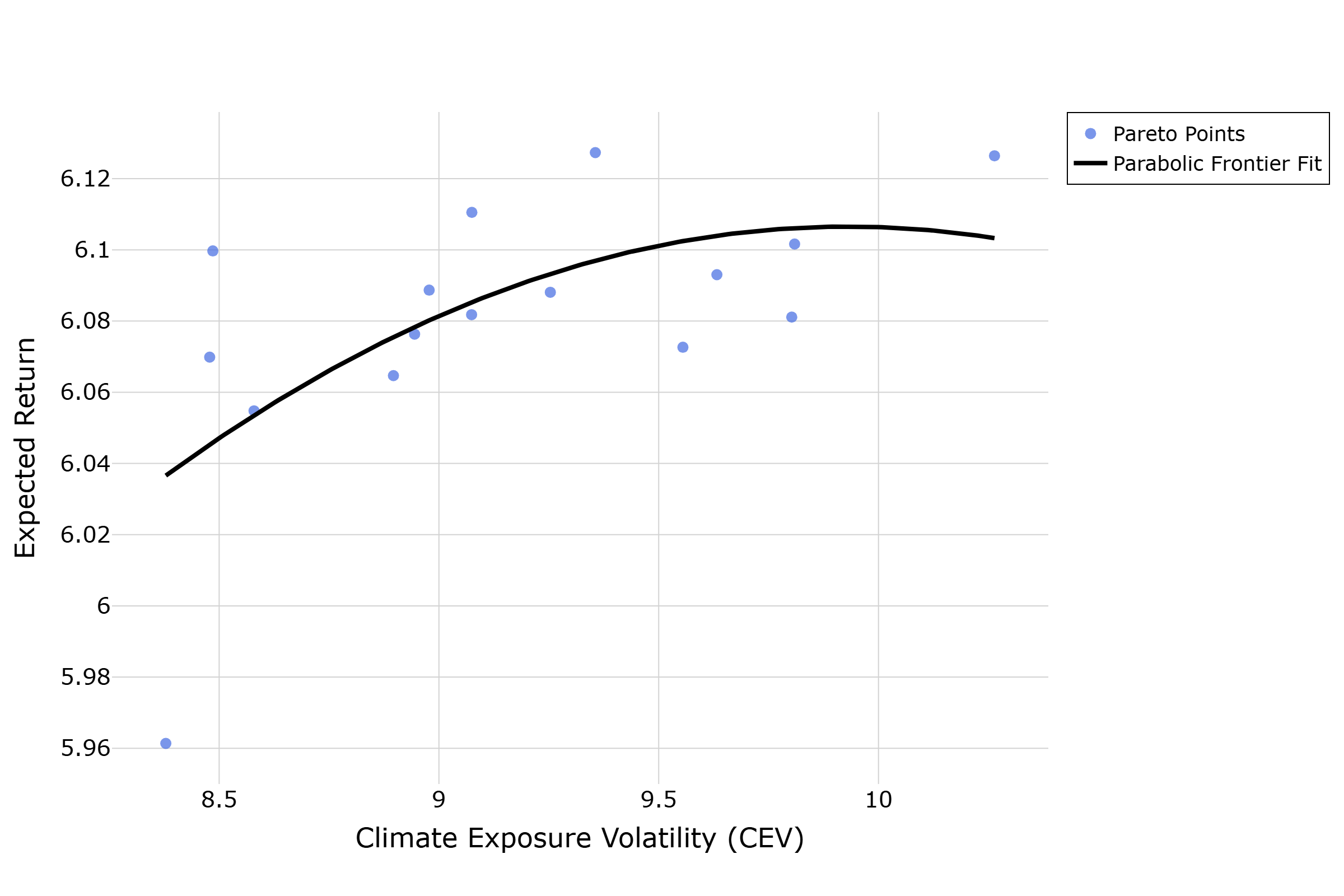} 
    \caption{Expected return vs. CEV: 2D cross-sectional Pareto projections (in percentage).}
    \label{fig:Cev_vs_Return}
\end{figure}

Similar considerations could be stated when replacing the CEV with the \textit{Climate Risk Exposure} (CRE). We refer the interested reader to \ref{app:cre_pareto_front}.


\section{Backtesting analysis}
\label{sec:backtesting}

In this section, we evaluate the ex-post performances of the portfolios from the Pareto frontier derived in Section \ref{sec:climate_opt}, which explicitly incorporates climate risk. Specifically, the backtesting analysis covers the period from January 1, 2020, to April 30, 2025 considering an in-sample rolling-window starting from January 1, 2015. We execute the analysis on an Intel Core i5 with 16 GB RAM and Windows 10 Pro system using PyCharm 2025.2.3. We also recall that the time execution is 217.39 seconds per month.

On the first business day of each month within the backtesting period, the Pareto frontier is re-optimized running the MOPSO algorithm described in the previous section: we simultaneously minimize the market variance and the CEV, while maximizing the expected return. The input parameters are updated monthly using a 5-year rolling-window approach. The expected returns are calculated as the mean of the monthly returns over the preceding five years, and the covariance matrix is estimated using daily returns from the same five-year look-back period. Furthermore, the temperature anomaly probabilities for each continent are updated using the most recent available data, specifically those recorded in the month immediately preceding the rebalancing date. 

Following this monthly re-optimization on the first business day, the performance of the selected portfolios is evaluated at the end of the month. This procedure is repeated iteratively: at the start of each subsequent month, the portfolios are recalculated and rebalanced based on the newly updated Pareto frontier. 

In our analysis, we include market-cap-weighted portfolio and the equally-weighted portfolio. Furthermore, we examine specific portfolios on the Pareto frontier, such as the minimum market variance portfolio, minimum \textit{Climate Exposure Volatility} portfolio and the maximum expected return portfolio. 

To identify additional portfolios on the frontier that balance all three competing objectives, i.e. expected return, variance, and CEV, we apply a min-max normalization to each objective. This ensures that all components are scaled within the $[0, 1]$ range while preserving their relative ordering. We then combine these normalized objectives into a single scalar function using a convex combination governed by three positive parameters $(a, b, c)$. Since our optimization framework is structured as a minimization problem, we select the portfolio on the Pareto Frontier that yields the minimum value of this combined objective function. This approach allows us to explore different strategic tilts by varying the weights assigned to each objective. Specifically, we test the  weight combinations reported in Table \ref{tab:weight_scenarios}.

\begin{table}[htbp]
\centering
\caption{Weighting portfolios for multi-objective optimization ($a, b, c$).}
\label{tab:weight_scenarios}
\begin{tabular}{lcccl}
\toprule
\textbf{Portfolio} & \bm{$a$} & \bm{$b$} & \bm{$c$} & \textbf{Investment Strategy} \\
\midrule
Balanced      & $1/3$ & $1/3$ & $1/3$ & Equal importance to all objectives \\
Minimum Risk  & $0$   & $1/2$ & $1/2$ & Financial and climate risk awareness \\
Return-oriented & $1/2$ & $1/4$ & $1/4$ & Priority to expected returns \\
Variance-oriented & $1/4$ & $1/2$ & $1/4$ & Priority to variance minimization\\
Climate-oriented  & $1/4$ & $1/4$ & $1/2$ & Priority to climate risk awareness\\
Mean-Variance & $1/2$ & $1/2$ & $0$   & Traditional Mean-Variance \\
\bottomrule
\end{tabular}
\end{table}

\vspace{0.4cm}

Figure \ref{fig:backtest} illustrates the cumulative returns of the various portfolio strategies over the analyzed period.
\begin{figure}[H]
    \centering
    \makebox[\textwidth][c]{
        \includegraphics[width=1.1\textwidth, keepaspectratio]{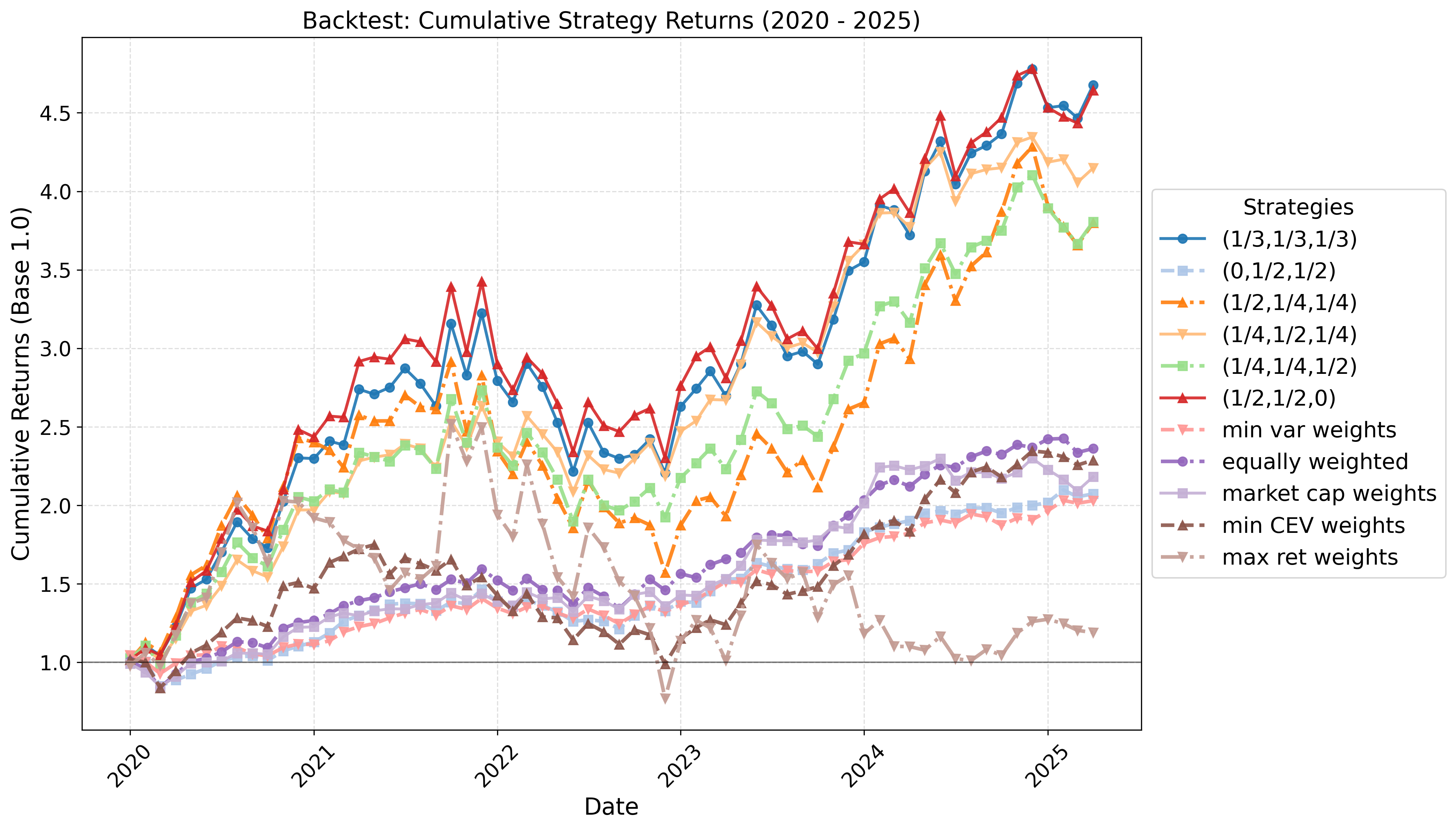}
    }
    \caption{Cumulative returns of the Pareto-optimized portfolios and benchmarks (January 2020 -- April 2025).}
    \label{fig:backtest}
\end{figure}
We observe that minimum variance, equally-weighted, and market-cap-weighted portfolios exhibit comparable performance, characterized by a steady upward trajectory and remarkably low volatility. In contrast, portfolios incorporating climate risk and expected returns highlight a trade-off profile, achieving superior returns at the expense of higher variance

To better evaluate the risk-adjusted performance of these portfolios, we measure several key financial metrics: the Compound Annual Growth Rate (CAGR), the Sharpe Ratio, the Sortino Ratio, the Maximum Drawdown and the realized CRE and CEV (that we call r-CRE and r-CEV, respectively) over time. In our analysis, we assume a risk-free rate of zero for the calculation of both the Sharpe and Sortino ratios. The Maximum Drawdown represents the largest percentage drop from a peak to a trough, indicating the worst-case loss experienced during the period (see \citet{sortino2001managing} and \citet{aprea2025neural}). Finally, the r-CRE and r-CEV are derived by multiplying the climate normalized weights by the actually realized temperature anomalies, these products are then summed up across all continents to obtain a monthly aggregate, the r-CRE  is then computed as the sample mean of these aggregates over time, while the  r-CEV  as the sample variance. These metrics are summarized in  Table \ref{tab:performance_metrics}.

\begin{table}[H]
    \centering
    \caption{Performance Metrics Comparison (January 2020 -- April 2025)}
    \label{tab:performance_metrics}
    \begin{tabular}{lcccccc}
        \toprule
        \textbf{Strategy} & \textbf{CAGR} & \textbf{Sharpe } & \textbf{Sortino } & \textbf{ Drawdown} & \textbf{r-CRE} & \textbf{r-CEV} \\
        \midrule
        Market-Cap Weights   & 15.75\% & 1.041 & 1.850 & 14.78\% & 0.714 & 10.64\% \\
        Equally-Weighted     & 17.49\% & 1.153 & 1.834 & 16.72\% & 0.724 & 10.96\% \\
        Min Market Variance  & 14.20\% & 1.282 & 2.324 & \textbf{11.47\%} & 0.762 & 13.56\% \\
        Min CEV & 16.68\% & 0.739 & 1.207 & 43.48\% & 
        0.710 & 10.50\% \\
        Max Expected Return  & 3.32\%  & 0.320 & 0.674 & 69.37\% & \textbf{0.695} & 11.87\% \\
        (1/3, 1/3, 1/3)      & \textbf{33.54\%} & 1.153 & 2.787 & 31.54\% & 0.721 & \textbf{10.48\%} \\
        (0, 1/2, 1/2)        & 14.65\% & 1.133 & 1.304 & 17.38\% & 0.716 & 11.18\% \\
        (1/2, 1/4, 1/4)      & 28.43\% & 0.920 & 2.077 & 46.10\% & 0.727 & 10.51\% \\
        (1/4, 1/2, 1/4)      & 30.57\% & \textbf{1.312} & \textbf{3.029} & 20.78\% & 0.723 & 10.78\% \\
        (1/4, 1/4, 1/2)      & 28.47\% & 1.045 & 2.382 & 30.57\% & 0.716 & 10.70\% \\
        (1/2, 1/2, 0)        & 33.35\% & 1.136 & 2.356 & 32.81\% & 0.698 & 10.77\% \\
        \bottomrule
    \end{tabular}
\end{table}
The analysis of the performance metrics provides several key insights into the behavior of the different strategies. The maximum expected return portfolio exhibits the worst performance across all evaluated metrics (except for the climate risk metrics). This confirms that selecting a portfolio solely based on expected returns, without incorporating any risk measure, leads to extreme volatility and risk exposure. Indeed, it records the worst maximum drawdown of 69.37\%, and a Sharpe Ratio of only 0.320, which is significantly lower than all other strategies.

The equally-weighted and market-cap weighted portfolios, used as benchmarks, show similar trends, reflecting the general market performance over the last 5 years. The minimum CEV portfolio exhibits a relatively similar return. However, while it effectively minimizes climate risk resulting in one of the lowest realized CEV and CRE despite the latter not being explicitly targeted in the minimization, it struggles with volatility control showing the third worst Sharpe Ratio and a high Maximum Drawdown. In contrast, the minimum market variance portfolio, while achieving a lower CAGR compared to the benchmarks, turns out to be the most effective in controlling market volatility. It records a Maximum Drawdown of 11.47\%, the best among all observed portfolios.

Similarly, the $(0, 1/2, 1/2)$ strategy, which ignores expected returns to assign equal weight to financial and climate variance, displays a profile very close to the Minimum Variance portfolio, achieving the second best Maximum Drawdown at 17.38\% and the lowest average CEV.

When considering portfolios on the frontier that include expected return in their objective function, we observe the highest CAGR values. The $(1/3, 1/3, 1/3)$ strategy, which weights all three objectives equally, achieves the highest overall CAGR (33.54\%). The $(1/2, 1/2, 0)$ portfolio shows very similar results, although it does not explicitly weight the CEV in the convex combination (achieving indeed worst CEV than the  $(1/3, 1/3, 1/3)$ strategy), it still benefits from being selected from a Pareto frontier, originally constructed including this climate risk metric. 

While these CAGR values (around 30\%) may appear exceptionally high, it is important to note that this backtest does not account for transaction costs, bid-ask spreads, and in general rebalancing costs, which would realistically reduce net performance.

Finally, the remaining three portfolios show distinct behaviors. The $(1/2, 1/4, 1/4)$ strategy records the second-worst Maximum Drawdown (46.10\%), suggesting that its weights may over-prioritize returns at the expense of risk mitigation. Conversely, the $(1/4, 1/2, 1/4)$ portfolio emerges as a reasonable balanced choice, as it yields good results accross all six metrics, suggesting that this specific weight distribution provides one of the most effective compromise between the three competing objectives.

In conclusion, the backtesting results underscore the fundamental trade-offs inherent in multi-objective portfolio optimization. While strategies focused exclusively on expected returns prove to be unsustainable due to extreme drawdowns, the integration of climate risk metrics does not necessarily come at the expense of financial performance. On the contrary, portfolios that balance market risk, climate exposure, and expected returns, most notably the $(1/4, 1/2, 1/4)$ and $(1/3, 1/3, 1/3)$ configurations outline superior resilience and efficiency. 

These findings indicate that incorporating climate risk within the optimization framework serves as an effective diversification mechanism and does not necessarily deteriorate portfolio returns. Specifically, accounting for climate-related risk factors appears to enhance the diversification properties of the portfolio by introducing an additional dimension of risk consideration beyond traditional financial metrics. Despite this additional constraint, the empirical results do not suggest a consequent drop in return performance. The integration of climate-risk considerations can be achieved while maintaining competitive return levels, highlighting the potential compatibility between climate-aware investment strategies and traditional portfolio performance goals.

\vspace{0.7cm}

\section{Conclusions} \label{sec:conclusion}
\color{black}
This study has provided a flexible and robust quantitative foundation for climate-resilient portfolio construction, offering a viable tool for asset managers to navigate the escalating challenges of physical climate risk.

The work has investigated the impact of physical climate risk on global equity portfolios. Our empirical analysis has confirmed that extreme temperature anomalies represent a material financial threat, exerting a statistically significant negative pressure on asset returns across most economic sectors. 
This financial impact is deeply rooted in the non-stationary and intensifying nature of environmental hazards (see Figure \ref{fig:temperature_anomalies}). Indeed, our results have confirmed that climate risk is not a static outlier but a dynamic variable that has significantly escalated in recent decades. Consequently, these findings have invalidated the traditional assumption of a constant risk distribution, highlighting the urgent need for a dynamic approach to portfolio risk management that accounts for the evolving frequency and intensity of climate shocks.

The core contribution of this research lies in the introduction of two dynamic metrics: \textit{Climate Risk Exposure}  and \textit{Climate Exposure Volatility} (see Section \ref{sec:measures_of_risk_for_extreme_climate_events}). Unlike existing static country-level indices, these measures capture the time-varying probability of extreme events and the systemic risk arising from their spatial correlation across continents. Furthermore, by incorporating the asset intensity of each firm, these metrics account for firm-specific vulnerability, reflecting the varying degrees of dependency on physical assets. This effectively distinguishes between capital-heavy sectors like real estate and more agile industries like retailers, while considering the precise geographical distribution leveraging corporate revenues.

By integrating these measures into a Multi-Objective Particle Swarm Optimization  framework, we have highlighted that investors can explicitly account for climate resilience within the mean-variance paradigm, without necessarily sacrificing competitive expected return targets. We have investigated the non-linear trade-offs between expected returns, market variance, and climate risk, discussing how active geographic optimization is an efficient tool for minimizing vulnerability to physical shocks. Finally, our backtesting analysis has confirmed the practical benefits of this approach, showing that climate-aware strategies can enhance portfolio resilience in a dynamic market environment that include the COVID-19 pandemic and the 2022 energy shock, generating competitive performance compared to traditional benchmarks. In particular, the results suggest that integrating the \textit{Climate Exposure Volatility} does not necessarily impair financial returns, rather it provides an additional layer of risk management that helps identify more efficient portfolios along the Pareto frontier, offering ways to reduce both overall volatility and maximum drawdown while maintaining a competitive targets of expected return and diversification.

\bibliographystyle{elsarticle-harv}
\bibliography{bibliography} 

\appendix

\section{Proof of correlation bounds for Bernoulli variables} 
\label{app:corr_bound_proof}

In this section of the appendix we provide the full proof of the corollary regarding the correlation bounds for two Bernoulli random variables.

\begin{corollary}
\label{cor:bernoulli_bounds_app}
The correlation coefficient $\rho$ between two Bernoulli random variables $X_1 \sim \text{Bern}(p_1)$ and $X_2 \sim \text{Bern}(p_2)$ is bounded by $\rho_{min} \le \rho \le \rho_{max}$, where
\begin{equation}
    \rho_{min} = \frac{\max(0, p_1 + p_2 - 1) - p_1 p_2}{\sqrt{p_1(1-p_1)} \sqrt{p_2(1-p_2)}}
\end{equation}
\begin{equation}
    \rho_{max} = \frac{\min(p_1, p_2) - p_1 p_2}{\sqrt{p_1(1-p_1)} \sqrt{p_2(1-p_2)}} \;\;.
\end{equation}
\end{corollary}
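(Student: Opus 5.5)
The plan is to apply the Fréchet–Hoeffding theorem stated above with $F_1,F_2$ the Bernoulli distribution functions, and then to compute the two extremal covariances explicitly. Throughout I assume $p_1,p_2\in(0,1)$, so that both standard deviations are positive and the correlation is well defined.

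Since the marginals are fixed, every coupling has the same variances $p_1(1-p_1)$ and $p_2(1-p_2)$. Correlation is therefore covariance divided by the constant $\sqrt{p_1(1-p_1)}\sqrt{p_2(1-p_2)}$. Moreover, $\mathrm{Cov}(X_1,X_2)=\mathbb{P}(X_1=1,X_2=1)-p_1p_2$. So it suffices to evaluate $\mathbb{P}(X_1=1,X_2=1)$ for the two comonotone and countermonotone couplings supplied by the theorem.

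The first step is to write down the generalized inverse of a Bernoulli$(p)$ distribution function. With $F^{-1}(u)=\inf\{x:F(x)\ge u\}$, one has $F^{-1}(u)=0$ for $u\le 1-p$ and $F^{-1}(u)=1$ for $u>1-p$. Equivalently, $F^{-1}(U)=\mathbf{1}\{U>1-p\}$, and the boundary point has probability zero.

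For the upper bound, the comonotone pair $(F_1^{-1}(U),F_2^{-1}(U))$ satisfies
\[
\mathbb{P}(U>1-p_1,\;U>1-p_2)=\mathbb{P}\bigl(U>\max(1-p_1,1-p_2)\bigr)=\min(p_1,p_2).
\]
Subtracting $p_1p_2$ and dividing by the product of standard deviations gives $\rho_{max}$.

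For the lower bound, the countermonotone pair has $F_2^{-1}(1-U)=\mathbf{1}\{1-U>1-p_2\}=\mathbf{1}\{U<p_2\}$. The joint event is therefore $\{1-p_1<U<p_2\}$, whose probability is $\max(0,\,p_2-(1-p_1))=\max(0,p_1+p_2-1)$. This yields $\rho_{min}$.

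The only delicate point is the careful handling of the generalized inverse and its strict versus non-strict inequalities. Because $U$ is continuous, endpoints carry no mass, so this should not cause real trouble. If one prefers not to rely on the theorem as a black box, an alternative route gives the same bounds directly. Any coupling has $\mathbb{P}(X_1=1,X_2=1)\le\min(p_1,p_2)$ by monotonicity. It also has $\mathbb{P}(X_1=1,X_2=1)=p_1+p_2-\mathbb{P}(X_1=1\text{ or }X_2=1)\ge p_1+p_2-1$ by inclusion–exclusion, and the probability is trivially $\ge 0$. These are exactly the Fréchet bounds, and the two couplings above show they are attained. Since the covariance-to-correlation map is increasing, the inequalities $\rho_{min}\le\rho\le\rho_{max}$ follow.
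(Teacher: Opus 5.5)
Your proposal is correct and follows the paper's proof essentially step for step. Both apply the Fréchet--Hoeffding theorem with the Bernoulli quantile function (your left-continuous version differs from the paper's only at $u=1-p$, a null set) and obtain $\mathbb{P}(X_1=1,X_2=1)=\min(p_1,p_2)$ for the comonotone pair and $\max(0,p_1+p_2-1)$ for the countermonotone pair, while your extra inclusion--exclusion route and the explicit assumption $p_1,p_2\in(0,1)$ are sound additions the paper leaves out.
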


\textit{Proof}. To derive the explicit forms of $\rho_{min}$ and $\rho_{max}$, we start from the Fréchet–Hoeffding bounds \\
\begin{equation}
    \text{corr}(F_1^{-1}(U), F_2^{-1}(1-U)) \leq \text{corr}(X_1, X_2) \leq \text{corr}(F_1^{-1}(U), F_2^{-1}(U)) \, ,
\end{equation} \\
with $U \sim \text{Unif}([0, 1])$, and we apply it to two Bernoulli random variable $X_1 \sim \text{Bern}(p_1)$ and $X_2 \sim \text{Bern}(p_2)$ with respective cumulative distribution function $F_1^{-1}$ and $F_2^{-1}$. 

For a Bernoulli variable $X \sim \text{Bern}(p)$, the quantile function is
\begin{equation}
    F^{-1}(u) = 
    \begin{cases} 
    0 & \text{if } 0 \le u < 1-p \\
    1 & \text{if } 1-p \le u \le 1\, .
    \end{cases}
\end{equation}

First of all, to estimate the upper bound $\rho_{max}$, we consider 
the product $F_1^{-1}(u) F_2^{-1}(u)$ which is equal to 1 if and only if
\begin{equation}
    u \ge 1-p_1 \quad \text{and} \quad u \ge 1-p_2 \implies u \ge \max(1-p_1, 1-p_2) = 1 - \min(p_1, p_2)
\end{equation}
and integrating over $[0, 1]$ one has
\begin{equation}
    E[F_1^{-1}(U) F_2^{-1}(U)] = \int_{1-\min(p_1, p_2)}^{1} 1 \, du = \min(p_1, p_2).
\end{equation}
Thus, the $\text{Cov}(F_1^{-1}(U), F_2^{-1}(U))$ is 

\begin{equation}
\begin{aligned}
    \text{Cov}(F_1^{-1}(U), F_2^{-1}(U)) &= E[F_1^{-1}(U) F_2^{-1}(U)] - E[F_1^{-1}(U)] \cdot E[F_2^{-1}(U)] \\
    &= E[F_1^{-1}(U) F_2^{-1}(U)] - E[X_1]E[X_2] \\
    &= \min(p_1, p_2) - p_1 p_2.
\end{aligned}
\end{equation}

Dividing by the product of standard deviations $ \sqrt{p_1(1-p_1)} \sqrt{p_2(1-p_2)}$, we obtain $\text{corr}(F_1^{-1}(U), F_2^{-1}(U))$, leading to 
\begin{equation}
    \rho_{max} = \frac{\min(p_1, p_2) - p_1 p_2}{\sqrt{p_1(1-p_1)} \sqrt{p_2(1-p_2)}}.
\end{equation}

Now, to calculate the lower bound $\rho_{min}$, we examine 
the product $F_1^{-1}(u) F_2^{-1}(1-u)$ that is 1 if and only if
\begin{equation}
    u \ge 1-p_1 \quad \text{and} \quad 1-u \ge 1-p_2 \implies 1-p_1 \le u \le p_2\, .
\end{equation}
This interval has a non-zero length only if $1-p_1 \le p_2$, which is equivalent to $p_1 + p_2 - 1 \ge 0$. The length of the interval is $\max(0, p_2 - (1-p_1)) = \max(0, p_1 + p_2 - 1)$. 

Integrating over [0,1] one gets
\begin{equation}
    E[F_1^{-1}(U) F_2^{-1}(1-U)] = \int_{1-p_1}^{p_2} 1 \, du = \max(0, p_1 + p_2 - 1)
\end{equation}
and thus, the $\text{Cov}(F_1^{-1}(U), F_2^{-1}(\-U))$ is 

\begin{equation}
\begin{aligned}
    \text{Cov}(F_1^{-1}(U), F_2^{-1}(1-U)) &= E[F_1^{-1}(U) F_2^{-1}(1-U)] - E[F_1^{-1}(U)] \cdot E[F_2^{-1}(1-U)] \\
    &= E[F_1^{-1}(U) F_2^{-1}(1-U)] - E[X_1]E[X_2] \\
    &= \max(0, p_1 + p_2 - 1) - p_1 p_2\, .
\end{aligned}
\end{equation}

Dividing by the product of standard deviations $ \sqrt{p_1(1-p_1)} \sqrt{p_2(1-p_2)}$, we obtain $\text{corr}(F_1^{-1}(U), F_2^{-1}(1-U))$, leading to 
\begin{equation}
    \rho_{min} = \frac{\max(0, p_1 + p_2 - 1) - p_1 p_2}{\sqrt{p_1(1-p_1)} \sqrt{p_2(1-p_2)}}\, .
\end{equation}


\section{Tri-Objective Optimization: Integrating Climate Risk Exposure}
\label{app:cre_pareto_front}

In this section of the appendix we present the results of the multi-objective optimization performed at January 1, 2020, where we utilize the \textit{Climate Risk Exposure} (CRE) instead of the \textit{Climate Exposure Volatility} (CEV) to evaluate the portfolio's climate-related risk. By adopting CRE as our objective, we aim to measure the absolute magnitude of the portfolio's vulnerability to extreme temperature anomalies, providing an alternative perspective on environmental risk.

The optimization framework simultaneously considers three competing objectives: expected return, market variance, and CRE. In line with the methodology described in Section \ref{sec:climate_opt}, expected returns and the covariance matrix are estimated using a 5-year rolling window of historical data. Furthermore, the probabilities of extreme temperature anomalies for each continent used in the CRE are sourced from the latest available data prior to the optimization date.

 Figure \ref{fig:persp_1_cre} displays the 3D Pareto frontier. This surface, similarly to the one in Figure \ref{fig:Pareto_Frontier_2020_fp}, illustrates the fundamental trade-offs in a multidimensional space, where any improvement in climate resilience or financial variance necessitates a compromise in expected return.

To analyze the structural composition of these optimal solutions, Figure \ref{fig:persp_2_cre} presents the Jaccard distance heatmap. The heatmap clearly highlights that portfolios with different objective values possess distinct asset compositions.

Finally, in Figure \ref{fig:Cre_vs_Return}, we can observe that an increase in the expected return is associated with an increase in the CRE. This confirms that CRE acts as an additional dimension of risk, analogous to standard variance. More specifically, CRE quantifies a portfolio’s vulnerability to environmental shocks, showing that investors seeking higher performance must accept not only greater market volatility but also heightened climate risk. 

\begin{figure}[H]
    \centering
    \begin{subfigure}[b]{0.49\textwidth}
        \centering
        \includegraphics[width=\textwidth]{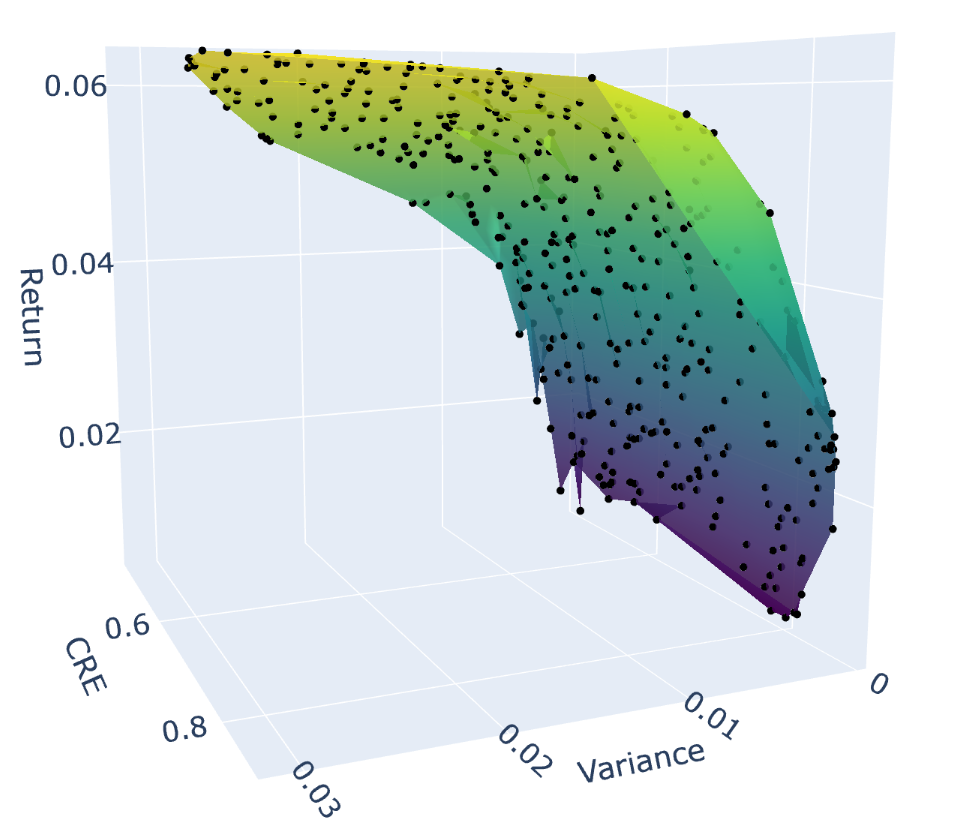}
        \caption{3D Pareto frontier integrating CRE}
        \label{fig:persp_1_cre}
    \end{subfigure}
    \hfill 
    \begin{subfigure}[b]{0.49\textwidth}
        \centering
        \includegraphics[width=\textwidth]{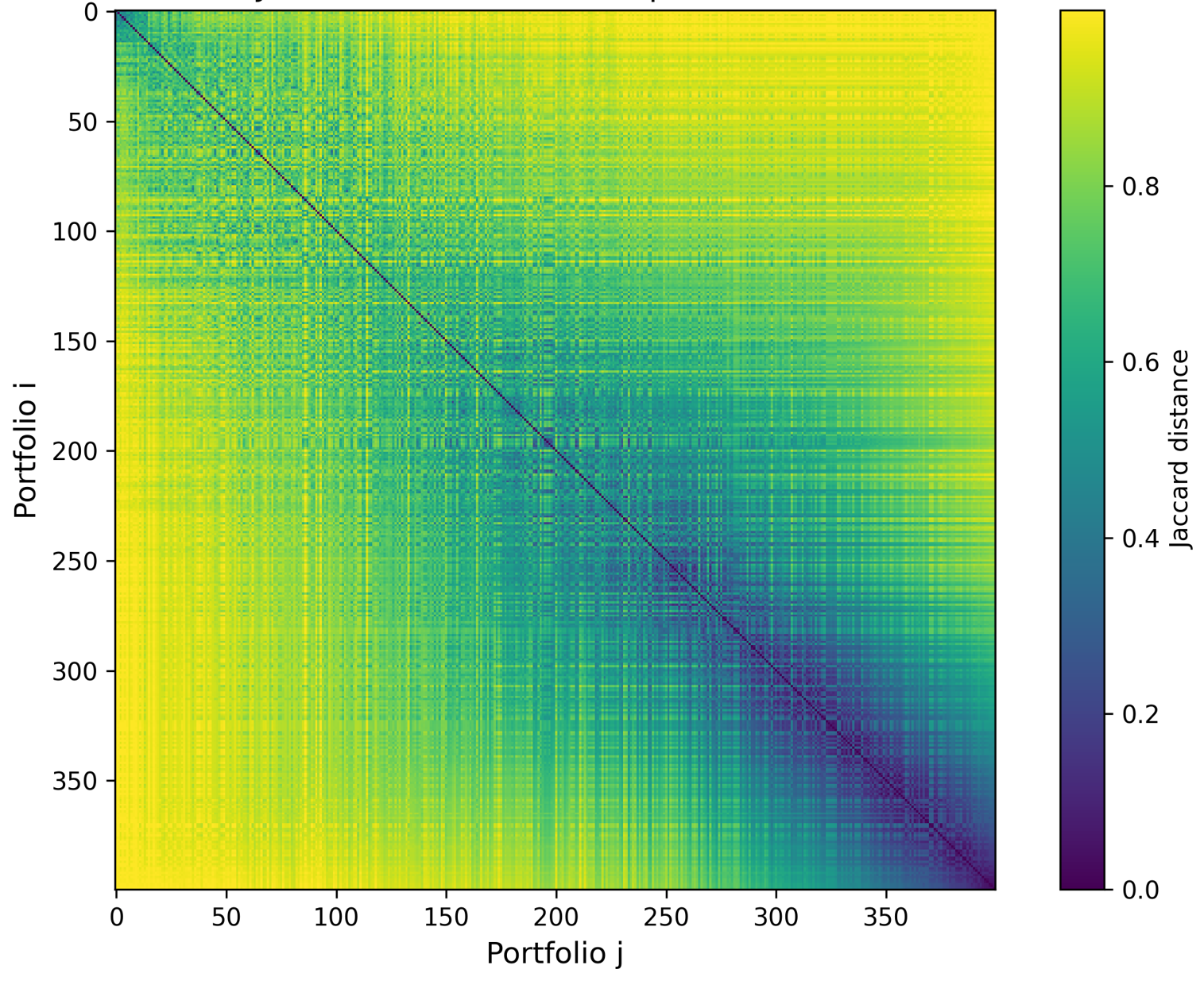}
        \caption{Jaccard distance heatmap for the CRE optimization}
        \label{fig:persp_2_cre}
    \end{subfigure}
    \label{fig:pareto_perspectives_all_cre}
\end{figure}

\vspace{0.7cm}

\begin{figure}[H]
    \centering
    \includegraphics[width=0.8\textwidth]{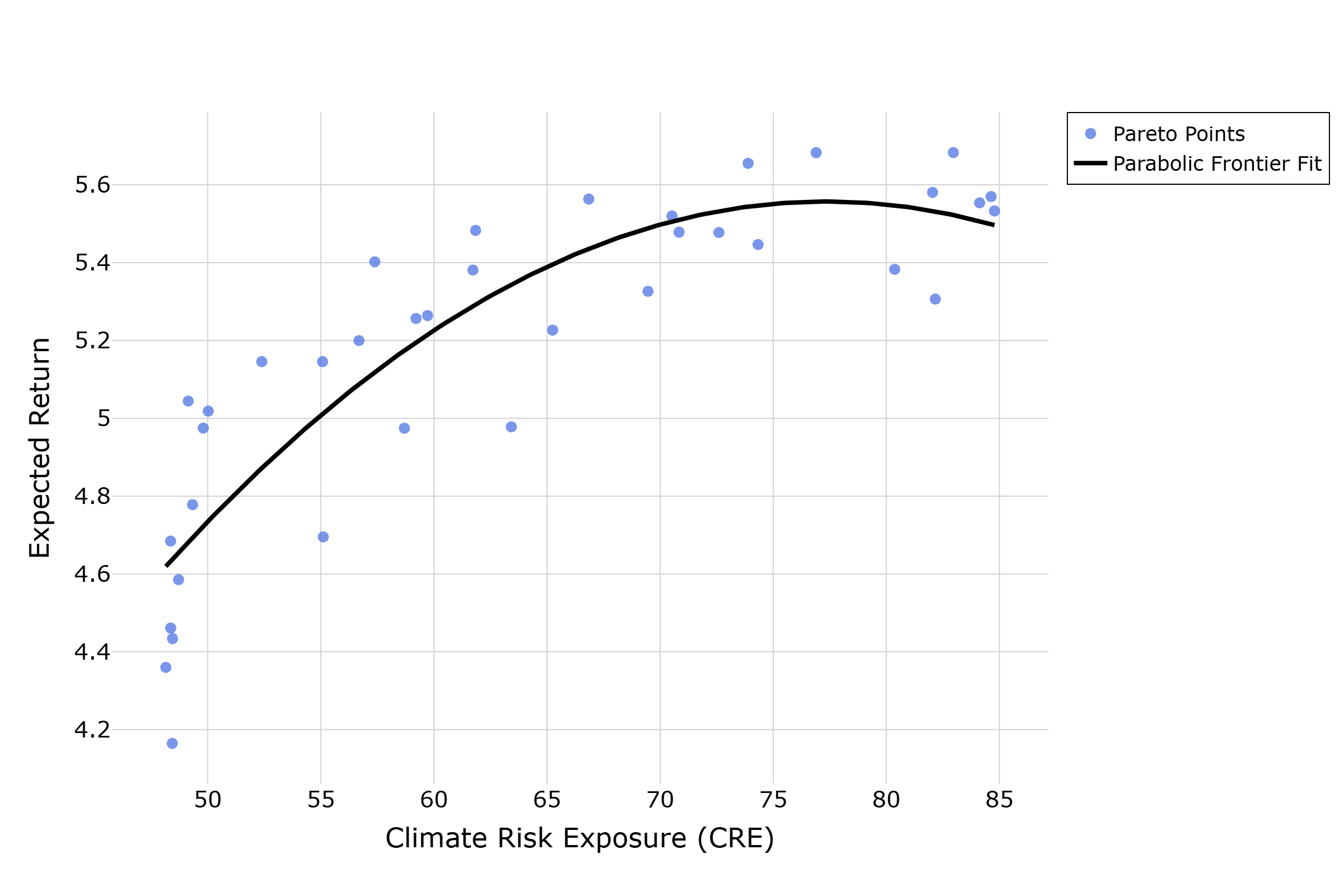} 
    \caption{Expected return vs. CRE: 2D cross-sectional Pareto projections (in percentage).}
    \label{fig:Cre_vs_Return}
\end{figure}

\vspace{0.7cm}
\section{Multi-objective optimization} \label{app:pareto_frontier_charts}

\subsection{The 3D Pareto frontier: exploring more perspectives}

\label{app:different_perspectives}
In this section of the appendix, additional viewpoints of the Pareto frontier, calculated at January 1, 2020, are presented to better visualize the multidimensional nature of the optimal solutions. Following the methodology described in Section \ref{sec:climate_opt}, this frontier is constructed within a multi-objective optimization framework designed to simultaneously minimize both market variance and \textit{Climate Exposure Volatility} (CEV), while maximizing the expected return. By observing these diverse perspectives, it is possible to appreciate the trade-offs between traditional financial risk, climate-related vulnerability, and potential growth across the set of optimal portfolios.

\begin{figure}[H]
    \centering
    \begin{subfigure}[b]{0.32\textwidth}
        \centering
        \includegraphics[width=\textwidth]{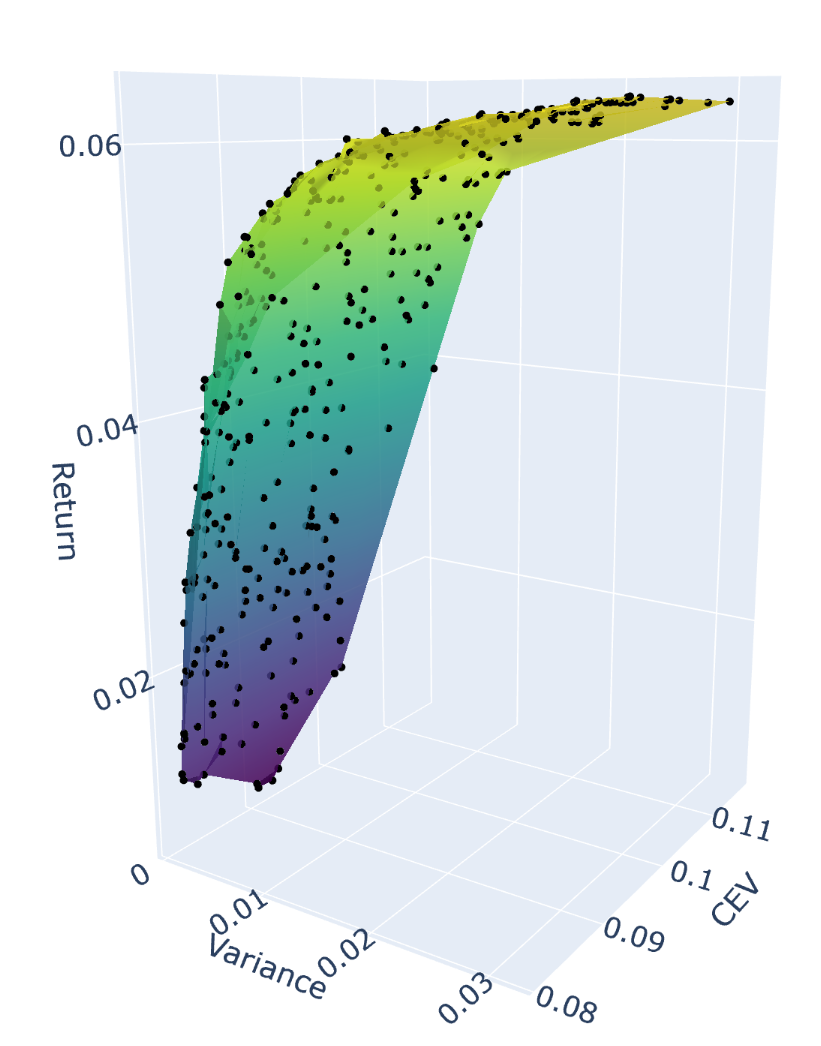}
        \caption{Perspective 1}
        \label{fig:persp_1}
    \end{subfigure}
    \hfill 
    \begin{subfigure}[b]{0.32\textwidth}
        \centering
        \includegraphics[width=\textwidth]{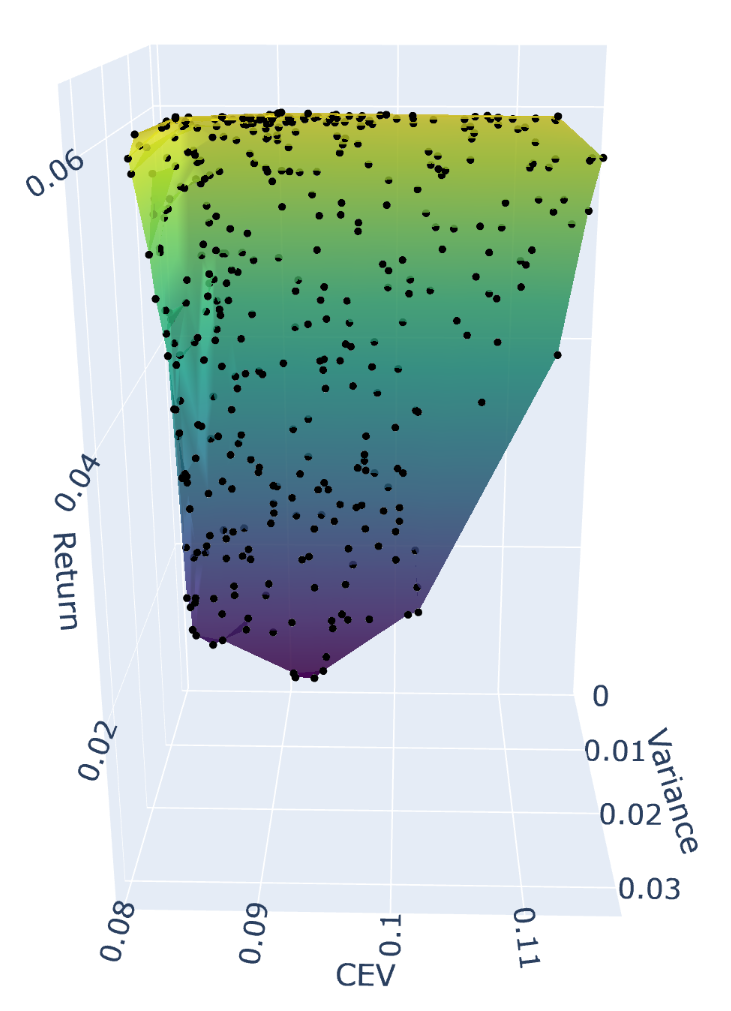}
        \caption{Perspective 2}
        \label{fig:persp_2}
    \end{subfigure}
    \hfill
    \begin{subfigure}[b]{0.28\textwidth}
        \centering
        \includegraphics[width=\textwidth]{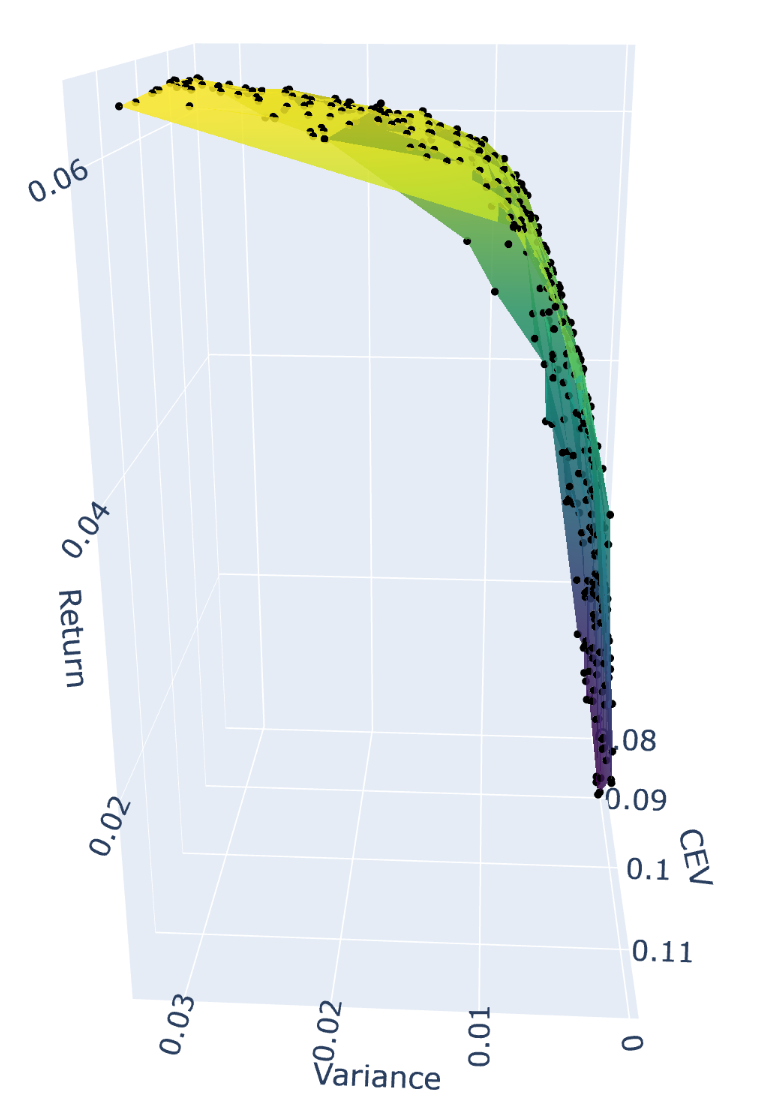}
        \caption{Perspective 3}
        \label{fig:persp_3}
    \end{subfigure}

    \label{fig:pareto_perspectives_all}
\end{figure}
\vspace{0.7cm}

\subsection{Multi-Objective Particle Swarm Optimization (MOPSO)}
\label{app:mopso}

The goal of this section is to provide additional details regarding the Multi-Objective Particle Swarm Optimization (MOPSO), the optimization algorithm employed to compute the Pareto fronts for the portfolio optimization problems discussed in this work. Specifically, this section provides comprehensive information regarding the algorithm's parameters and their functional roles, the specific values utilized in our simulations (see Table \ref{tab:mopso_parameters}), and a pseudocode illustrating the algorithm's execution flow (see Algorithm \ref{alg:mopso_standard}).




The scale of the search is defined by the population size ($N_{pop}$) and the number of iterations ($Iter$). While $N_{pop}$ represents the total number of candidate particles (portfolios) ``flying'' through the search space simultaneously, the iteration count determines the duration of the optimization process. Specifically, a higher $N_{pop}$ enhances the exploration of the search space, allowing the algorithm to evaluate a broader range of objective function values. On the other hand, a larger number of iterations ($Iter$) is crucial for the refinement of these solutions. To manage the non-dominated solutions discovered, the repository size ($N_{rep}$) limits the number of portfolios stored in the final archive, ensuring that the resulting Pareto front remains manageable yet representative of the entire trade-off spectrum.

To guarantee a well-distributed frontier, avoiding clusters of portfolios with nearly identical risk-return profiles, the algorithm employs a grid system ($N_{grid}$) and an inflation rate ($\alpha$). 
The grid system works by partitioning the objective space into hypercubes (cells). This discretization allows the algorithm to monitor the density of solutions across the frontier. By tracking how many portfolios fall into each cell, the MOPSO can steer the swarm toward less-populated regions, ensuring that the final output is not just a collection of optimal points, but a uniform representation of the entire risk-return trade-off.

In this context, the inflation rate ($\alpha$) is used to slightly expand the boundaries of the grid beyond the actual minimum and maximum objective values found in the repository ($s_{min}$ and $s_{max}$). Specifically, it calculates an offset $dc = (s_{max} - s_{min}) \cdot \alpha$, which is subtracted from the minimum and added to the maximum values.
Moreover, the selection pressure ($\beta$) and deletion pressure ($\gamma$) parameters are used to favor solutions in less-populated areas of the grid. 

The core of the MOPSO optimization process lies in the velocity update rule, which determines how each particle, representing a specific portfolio allocation, adjusts its movement through the search space. The direction and magnitude of a particle's displacement are governed by a stochastic velocity equation that balances three fundamental forces: inertia, social influence, and cognitive memory. Specifically, the velocity update combines the particle's previous velocity, modulated by the inertia weight $\omega$, with two acceleration components directed toward one of the global best positions found by the swarm ($gbest_{pos}$) and the individual particle's own historical best position ($pbest_{pos}$). 

The inertia component, $\omega$, provides the necessary momentum to maintain the particle's previous direction, ensuring stability in the search and preventing erratic changes that could lead to sub-optimal convergences. The social component, weighted by the coefficient $c_1$ and a random variable, pulls the particle toward a leader selected from the non-dominated repository. This selection, influenced by the pressure parameters $\beta$ and $\gamma$, ensures that the swarm gravitates toward the most promising and less explored regions of the current Pareto front. Simultaneously, the cognitive component, weighted by $c_2$, attracts the particle back to its personal best configuration. The interaction of these vectors, each scaled by random stochastic factors, defines both the trajectory and the intensity of the search. By modulating these weights, the algorithm effectively balances global exploration with local refinement, allowing the particles to adaptively shift the portfolio weights until the empirical frontier converges toward the theoretical optimum.

Finally, the mutation rate ($\mu$) introduces random variations in the positions of the particles, maintaining genetic diversity within the swarm and preventing the algorithm from converging prematurely on local sub-optimal frontiers.

\vspace{0.7cm}

\begin{table}[H]
\centering
\caption{MOPSO algorithm parameters and configuration.}
\label{tab:mopso_parameters}
\begin{tabular}{llc}
\hline
\textbf{Parameter} & \textbf{Description} & \textbf{Value} \\ \hline
$N_{pop}$ & Population size & 500 \\
$Iter$ & Number of iterations & 300 \\
$N_{rep}$ & External repository size & 400 \\
$N_{grid}$ & Number of grids per dimension & 10 \\
$\alpha$ & Grid inflation rate & 0.1 \\
$\beta$ & Selection pressure (global best selection) & 1 \\
$\gamma$ & Deletion pressure (repository maintenance) & 1 \\
$\mu$ & Mutation rate & 2 \\
$\omega$ & Inertia weight & 0.8 \\
$c_1$ & Personal learning coefficient & 1.5 \\
$c_2$ & Global learning coefficient & 1.5 \\
$lb$ & Lower bound of asset weights & 0 \\
$ub$ & Upper bound of asset weights & 1 \\ \hline
\end{tabular}
\end{table}

\begin{algorithm}[H]
\caption{Multi-Objective Particle Swarm Optimization (MOPSO)}
\label{alg:mopso_standard}
\begin{algorithmic}[1]

\STATE \textbf{Initialization:}
\STATE $pos \gets \text{list of } N_{pop} \text{ particles random initialized}$
\STATE \text{ensure particles' components are within bounds }$(lb, ub)$ and sum up to 1
\STATE $vel \gets \text{list of particle velocities initialized to 0}$
\STATE $objs \gets \text{list of objective values of the particles}$
\STATE $pbest \gets objs$, $pbest\_pos \gets pos $
\STATE $rep, rep\_obj \gets \text{non-dominated particles from } \{pos, objs\}$
\STATE $grid \gets \text{create\_grid}(rep\_obj, ngrid, \alpha)$

\STATE \textbf{Main loop:}
\FOR{$t = 0$ \TO $iter - 1$}
    \STATE $pm \gets (1 - t/(iter-1))^{1/\mu}$ \text{ (Mutation probability)}
    
    \FOR{$i = 0$ \TO $npop - 1$}
        \STATE \textbf{Update position:}
        \STATE $gbest\_pos \gets rep[\text{select\_gbest}(\beta)]$
        \STATE $vel[i] \gets \omega \, vel[i] + c_1 r_1 (gbest\_pos - pos[i]) + c_2 r_2 (pbest\_pos[i] - pos[i])$
        \STATE $pos[i] \gets \text{pos[i] + vel[i]}$
        \STATE $\text{ensure components of }pos[i] \text{ are within bounds } (ub,lb)$ and { sum up to 1}
        \STATE $objs[i] \gets \text{cal\_obj}(pos[i]) \text{ (update the objective values of the particle)}$

        \STATE \textbf{Mutation:}
        \IF{$\text{rand}() < pm$}
            \STATE $new\_pos \gets \text{mutate}(pos[i])$
            \STATE $new\_obj \gets \text{cal\_obj}(new\_pos)$
            \IF{$new\_obj$ dominates $objs[i]$ \OR ($objs[i]$ not dominates $new\_obj$ \AND $\text{rand}() < 0.5$)}
                \STATE $pos[i] \gets new\_pos$, $objs[i] \gets new\_obj$
                \STATE $\text{ensure components of }pos[i] \text{ are within bounds } (ub,lb)$  $ \text{and sum up to 1}$
            \ENDIF
        \ENDIF

        \STATE \textbf{Update personal best:}
        \IF{$objs[i]$ dominates $pbest[i]$ \OR ($pbest[i]$ not dominates $objs[i]$ \AND $\text{rand}() < 0.5$)}
            \STATE $pbest[i] \gets objs[i]$, $pbest\_pos[i] \gets pos[i]$
        \ENDIF
    \ENDFOR

    \STATE \textbf{Update repository and grid:}
    \STATE $rep.extend(pos)$, $rep\_obj.extend(objs)$
    \STATE remove dominated solutions from $rep$ and $rep\_obj$
    \STATE $grid \gets \text{create\_grid}(rep\_obj, ngrid, \alpha)$
    \WHILE{$length(rep) > N_{rep}$}
        \STATE $del\_ind \gets \text{select\_deletion}(\gamma)$
        \STATE $rep.pop(del\_ind)$, $rep\_obj.pop(del\_ind)$
    \ENDWHILE
\ENDFOR
\RETURN $result \gets \{rep, rep\_obj\}$
\end{algorithmic}
\end{algorithm}

\cleardoublepage


\section*{Acknowledgements}
We thank F. Cesarone, P. Manzoni, E. Sala, D. Stocco,  L. Viola and G. Zarfati and all participants to the Quantitative Finance Workshop 2026 for the insightful comments.
The research of G. Sbaiz was partially supported by the project of the Regional Programme (PR) FSE+ 2021/2027 of the Friuli Venezia Giulia Autonomous Region—PPO 2023—Specific Programme 22/23.  The research of M. Azzone is part of the activities of ``Dipartimento di Eccellenza 2023-2027''. M. Azzone and G. Sbaiz are part of the INdAM Research group GNAMPA.

\end{document}